\newcommand{\norm}[1]{\left\lVert {#1} \right\rVert}
\newcommand{\ket}[1]{\ensuremath{\left|#1\right\rangle}}
\newtheorem{thm}{Theorem}[section]
\theoremstyle{definition}
\theoremstyle{remark}
\theoremstyle{note}
\newcolumntype{A}{D{.}{.}{2.3}}
\newcommand{\Rmnum}[1]{\expandafter\@slowromancap\romannumeral #1@}
\author{Asif Shakeel}
\address{CA 92093}
\email{asif.shakeel@gmail.com}
\title[Neighborhood-History QW] {Neighborhood-History Quantum Walk}
\begin{document}

\begin{abstract}
History dependent discrete time quantum walks (QWs) are  often studied for their lattice traversal properties. A particular  model in the literature uses the  state  of a   memory  qubit at each site to record visits and to control the dynamics of the walk. We generalize this model to the  {\it neighborhood-history quantum walk} (NHQW), in which  the walk dynamics and  the state of the memory qubits in a {\it neighborhood} of the particle's position are interdependent. To demonstrate it, we construct an NHQW on a one-dimensional lattice, with a simple neighborhood.  Several dynamically interesting history dependent QWs   can be realized as single-particle sectors of quantum lattice gas automata (QLGA).  In contrast, the   NHQW constructed in this paper is realized as a single-particle sector of the more general quantum cellular automaton (QCA).  The   complexity of the NHQW  dynamics presents a promising avenue toward richer walk strategies and a potentially useful model of QWs for the 
Noisy Intermediate-Scale Quantum (NISQ) 
era of quantum computing.
It also modifies QWs to conceivably allow for modeling fundamental physics  incorporating  quantum field interactions with particles.
\end{abstract}

\maketitle
  
\section{Introduction} \label{section:intro}
 Discrete time quantum walks (QWs) have an established status as models of   quantum   search algorithms,~\cite{mnrs:svqw,kmor:fiaeadqw,sm:qsmcba,bib:skw,bib:Montanaro20150301,pm:gqn, pmcm:qgcn}. Being   capable of  universal quantum computation~\cite{lcetk:uqcdqrw}, they are contenders  for quantum computing tasks as well.  It is noteworthy  that physical realizations of QWs have emerged,  and as they  develop further~\cite{plp:rqw,qlm:eqwqp}, they are useful for  demonstrating and manipulating quantum behavior.  Altogether,  strong interest  continues to grow in understanding and designing  their  lattice traversal properties~\cite{bca:qwdmc,bca:qctrw,mg:odqwwm,rbg:qwwmrccf,pbhmpbk:nrnrqw,crt:qwwtsa1d,bib:Rosmanis,bib:akrcmqwf} to achieve faster and  more accurate search algorithms.

 From a search standpoint, to perform  a more efficient search with a QW, a desirable feature may be {\it self-avoidance}: reduced probability of visiting previously visited sites in favor of unvisited sites~\cite{pbhmpbk:nrnrqw,crt:qwwtsa1d}.   With this goal, a  model of  QW proposed by Camilleri {\it et al.}~\cite{crt:qwwtsa1d}  expands the QW model to append a memory qubit to each lattice site, to maintain  a record of particle visits.  When the  particle hops to a  site,  the interaction between the site's memory  qubit and the particle modifies the memory qubit state and controls the particle direction for the next hop. We would like, instead,  to make the  particle interact  with the  neighboring  memory qubits prior to the hop.   That requires a further  expansion of the QW  model to bring the neighboring memories in   interaction with the particle. 
 
 As quantum computers transition to the  Noisy Intermediate Stage Quantum (NISQ)~\cite{p:qcnisq} regime, QWs would be realizable as quantum circuits. This model, then, may have significance in modeling QWs on NISQ computers,  modeling the persistent coupling between neighboring qubits. 
In the current state of development of quantum computers,  qubits in NISQ machines, depending on the physical medium and the gate implementation, do not have complete connectivity among them. Optimizations are being proposed for  compilation of algorithmic circuits to gates~\cite{zpw:emmqciqa}. Generally,  a subset of qubits control another subset,  typically  as a directed partial graph among the qubits. This leads to algorithmic operations involving neighboring circuit qubits to necessarily be implemented through a sequence of intermediate operations with other neighboring qubits (swaps). Also, typically not all the qubits of a machine are being used for a circuit, at least  not \textit{actively}  in a given algorithmic step. Modeling a quantum walk implementation on a NISQ machine would  lend itself to a NHQW model to realistically obtain the dynamics in a noisy environment, in which coherence times and couplings with \textit{inactive} qubits play a role.

An enduring topic of investigation since Feynman's original work~\cite{f:spwc} is simulation of fundamental physics using quantum computers~\cite{dm:qmlga}. Recently, work in  simulating Dirac equation on triangular and honeycomb lattices through QWs~\cite{adma:deqwhctl}, as well as  QWs to  simulate physics of  Dirac fermions in curved spacetime~\cite{mbd:qwmdfcst},   extends these ideas towards quantum gravity. Interactions of quantum fields with particles in a true quantum mechanical sense would require incorporating information about the fields and their  interactions in the neighborhood of particle. NHQW could serve as a useful model in this regard.

This paper is organized as follows. In Section~\ref{sec:QW} we give a brief description of the basic quantum walk (QW). In Section~\ref{sec:ndqw} we introduce the {\it neighborhood-history quantum walk} (NHQW), and as an example, recall the model from~\cite{crt:qwwtsa1d} that corresponds to a trivial neighborhood. In Section~\ref{sec:ndqwlrsym} we consider  NHQWs over  a  simple neighborhood (left/right symmetric). First we discuss a model of scattering similar to the one in~\cite{crt:qwwtsa1d} for   this neighborhood, and give a plausibility argument for why it would behave similarly. Then we move on to the NHQW scheme we are interested in, fundamentally different from this model, embodying internal (spin) parameters and external interaction parameters.  This construction is related to the work in~\cite{lsw:lduobag} on unentangled orthogonal bases (UOB). We show  simulation results of the walk patterns obtained by choosing different sets of parameters. Adjusting these parameters leads to  a range of  walk behaviors. Next, we describe a quantum cellular automaton (QCA) whose restriction to a  single-particle sector is this NHQW. It turns out that this QCA is a concatenation of QLGA, and such QCA are investigated in~\cite{bib:msqcawp}. That paper studies  QCA that have no particle description at the scale at which the dynamics are homogeneous. Section~\ref{sec:conc} concludes with some  observations and future directions.

\section{Quantum walk} \label{sec:QW}
A QW consists of a  quantum particle hopping  on a lattice from site to site and scattering on arrival at the sites. Let us understand a basic QW on a   one-dimensional finite sized lattice\footnote{Generalization to infinite lattice and multiple dimensions can be done through the framework in~\cite{bib:slwqcaqlga,bib:sdlhdQWqlga}.} of length $N$.  We denote the lattice $\mathcal{L}=\mathbb{Z}_N$.  The particle has a position on the lattice, $x \in \mathcal{L} = \mathbb{Z}_N$,  and a velocity,   $v \in \{ +1,-1\}$,  giving the   direction of its next hop. The state of the particle is a unit vector in a Hilbert space with its basis elements labeled jointly by position and velocity, i.e., the QW  Hilbert space is a tensor product of  Hilbert spaces associated with position and velocity.   The  position Hilbert space, $\mathcal{H}_\mathcal{L} = \mathbb{C}^N$, is $N$-dimensional and has basis  $\{\ket{x}:  x \in \mathcal{L} = \mathbb{Z}_N\}$.\footnote{$\mathbb{Z}_N = \mathbb{Z}/(N)$.} The  velocity Hilbert space is two-dimensional,  $\mathcal{H}_V = \mathbb{C}^2$, with basis  $\{\ket{v} : v = +1,-1\}$. The $\ket{+1}$ velocity vector   will be referred to as the {\it right-moving} and   $\ket{-1}$ velocity vector  as the {\it left-moving} on the lattice.      Thus, the QW Hilbert space is
\begin{equation*}
\mathcal{H} = \mathcal{H}_\mathcal{L}\otimes \mathcal{H}_V = \mathbb{C}^N \otimes\mathbb{C}^2.
\end{equation*}
State of the QW is a vector $\psi \in \mathcal{H}$ of unit norm $\norm{\psi} = 1$.

At each time step, the state transitions in two successive stages:
 \begin{enumerate}[label=(\roman{*})] 
\item \label{scat}  {\it Scattering} (or {\it coin toss})
\begin{equation*}
S = I\otimes U_S : \ket{x} \ket{v} \mapsto \ket{x} U_S(\ket{v}),
\end{equation*}
where $U_S$ is a unitary map on the velocity space $\mathbb{C}^2$. It is also commonly called a ``coin" operator.
\item \label{prop} {\it Advection} (or {\it shift})
\begin{equation*}
A : \ket{x}   \ket{v} \mapsto \ket{x+v}   \ket{v}.
\end{equation*}
Here, the addition is modulo $N$ to allow ``wrap-around" at the edges of the lattice.
\end{enumerate}   
The overall QW transition rule is denoted $T$, a composition of scattering followed by advection, 
\begin{equation*}
T = A S = A (I\otimes U_S).
\end{equation*}

\section{Neighborhood-history  quantum walk}  \label{sec:ndqw}
We  define the neighborhood-history quantum walk (NHQW).  The model is described in somewhat general terms. Denote by  $\mathcal{L} = \mathbb{Z}_{N_1} \times \ldots \times \mathbb{Z}_{N_n}$ the lattice on which the particle walks, so that each position basis element $\ket{x}$ is labeled by  $x \in \mathcal{L}$.  A velocity basis element $\ket{v}$ is labeled from some finite subset of the lattice: $v \in \mathcal{V} \subset \mathcal{L}$.

The Hilbert space of an NHQW is 
\begin{equation*} 
\mathcal{H} =\mathcal{H}_\mathcal{L} \otimes \mathcal{H}_V \otimes  \bigotimes_{k \in \mathcal{L}} \mathcal{H}_{M_k} =\mathbb{C}^{\left\vert  \mathcal{L} \right\vert} \otimes\mathbb{C}^{\left\vert  \mathcal{V} \right\vert}\otimes  \bigotimes_{k \in \mathcal{L}}  \mathbb{C}^2.
\end{equation*}
$\mathcal{H}_\mathcal{L} = \mathbb{C}^{\left\vert  \mathcal{L} \right\vert}, \mathcal{H}_V = \mathbb{C}^{\left\vert  \mathcal{V} \right\vert}, \mathcal{H}_{M_k} = \mathbb{C}^2$,  correspond to Hilbert spaces for position, velocity, and memory qubit at position $k\in \mathcal{L}$, respectively. 
A basis element of this Hilbert space is given as
\begin{equation*} 
\ket{x} \ket{v}  \bigotimes_{k \in \mathcal{L}}  \ket{ m_k},
\end{equation*}
with $x \in \mathcal{L}$, $v \in \mathcal{V}$, and $ m_k \in \{0,1\}$.

In this model, we allow a set of memory elements from the sites surrounding every site to participate in the scattering.  This is given by the {\it neighborhood} denoted by  a finite subset of the lattice,  $\mathcal{M} \subset \mathcal{L}$.  The neighborhood of site $x$ is just the  neighborhood shifted to $x$: $\mathcal{M}_x = x+ \mathcal{M}$. 

An NHQW transition has the familiar steps of scattering (implicitly controlled by $\ket{x}$ through the neighborhood selection) and advection:
\begin{enumerate}[label=(\roman{*})] 
\item \label{memo3}  Scattering,
\begin{equation*}
 S :   \ket{x} \ket{v}  \bigotimes_{k \in \mathbb{Z}_N} \ket{ m_k} \mapsto \ket{x}    \bigotimes_{l \in \mathcal{L} \setminus \mathcal{M}_x} \ket{ m_l}U_S\bigg( \ket{v} \bigotimes_{k \in \mathcal{M}_x} \ket{ m_k}\bigg),
\end{equation*}
where $U_S$ is the {\it neighborhood scattering operator}  acting unitarily  on velocity and neighborhood memory Hilbert space: $\mathcal{H}_V \otimes  \bigotimes_{k \in \mathcal{M}_x} \mathcal{H}_{M_k}$


\item \label{prop3} Advection, 
\begin{equation*}
A :   \ket{x} \ket{v}  \bigotimes_{k \in \mathbb{Z}_N} \ket{ m_k} \mapsto \ket{x+v} \ket{v} \bigotimes_{k \in \mathbb{Z}_N} \ket{ m_k}.
\end{equation*}

\end{enumerate}    
The NHQW transition is given as 
\begin{equation*}
T =  A    S.
\end{equation*}

In this paper, all the examples will be concerned with 
 the lattice $\mathcal{L}=\mathbb{Z}_N$, for which the NHQW Hilbert space  is
\begin{equation}\label{eq:fqsq}
\mathcal{H} = \mathbb{C}^N \otimes \mathbb{C}^2 \otimes \bigotimes_{k \in \mathbb{Z}_N} \mathbb{C}^2,
\end{equation}
with $\mathcal{H}_\mathcal{L} = \mathbb{C}^N, \mathcal{H}_V = \mathbb{C}^2, \mathcal{H}_{M_k} = \mathbb{C}^2$, specifying Hilbert spaces corresponding to position, velocity, and memory qubit at position $k\in \mathbb{Z}_N$, respectively. We  write a computational basis element of the  Hilbert space $\mathcal{H}$ as
\begin{equation} \label{eq:shbasis}
 \ket{x} \ket{v}  \ket{m_0  \ldots m_{N-1}},
\end{equation}
where $m_k \in \{0,1\}$  denotes the memory qubit  corresponding to site $k$.

In~\cite{crt:qwwtsa1d}, we are introduced to the notion of using memory qubits, one per lattice site,  to record the history of visits, with $\ket{0}$  denoting the ``not visited" state and $\ket{1}$  denoting the ``have visited" state. Scattering happens in two stages. One stage updates the memory qubit on particle arrival at the site. The other stage  performs a memory-controlled operation on the velocity. These operations are implicitly controlled by $\ket{x}$ which determines the memory location involved in the scattering.\footnote{This description  is the one in~\cite{bib:sdlhdQWqlga} adapted from the original in~\cite{crt:qwwtsa1d}.} 
\begin{enumerate}[label=(\roman{*})] 
\item \label{memo3}  Memory,  
\begin{equation*}
M :   \ket{x} \ket{v}  \ket{m_0  \ldots m_{N-1}} \mapsto \ket{x}   \ket{v}  \ket{m_0 \ldots m_{x-1}}U_M(\ket{m_x}) \ket{m_{x+1}\ldots m_{N-1}},
\end{equation*}
where $U_M$ is a symmetric  matrix,\footnote{All the matrices are assumed to be with respect to the specified  bases for the respective Hilbert spaces.} parameterized by $\theta_m$ (the memory strength),
\begin{equation*} 
 U_M=\begin{pmatrix}
 \cos\theta_m&i \sin\theta_m\\
i \sin\theta_m&  \cos\theta_m\\
\end{pmatrix}.
\end{equation*}
\item \label{scat2}  Ricochet, 
\begin{equation*}
R :   \ket{x}  \ket{v}  \ket{m_0 \ldots m_x \ldots m_{N-1}}  \mapsto \ket{x}   R_{m_x}(\ket{v})  \ket{m_0 \ldots m_x \ldots m_{N-1}},
\end{equation*}
The state of the memory, $m_x$, determines the scattering $R_{m_x}$. For $m_x = 0$ (unvisited state), it is
\begin{equation*} 
 R_{0}=\frac{1}{\sqrt{2}}\begin{pmatrix}
 1&i \\
i &  1\\
\end{pmatrix},
\end{equation*}
and  for  $m_x = 1$ (visited), it is the symmetric scattering matrix, in turn parameterized by $\theta_b$ (the ``back action''),
\begin{equation*} 
 R_{1}=\begin{pmatrix}
 \cos\theta_b&i \sin\theta_b\\
i \sin\theta_b&  \cos\theta_b\\
\end{pmatrix}.
\end{equation*} 
\end{enumerate}    
Scattering  is the composition of the above operations
\begin{equation*}
S =     R M.
\end{equation*}
Advection acts as for the normal QW, 
\begin{equation*}
A : \ket{x}   \ket{v}  \ket{m_0 \ldots m_{N-1}}  \mapsto \ket{x+v}   \ket{v} \ket{m_0 \ldots m_{N-1}}.
\end{equation*}
Overall transition is
\begin{equation*}
T =  A    S.
\end{equation*}

In~\cite{crt:qwwtsa1d},  numerical simulations of this model for  several instructive pairs of values of the parameters  $\theta_m$ and $\theta_b$  are performed and discussed.  
In the same vein, as a contrast to the main example of this paper that will follow, we first describe a memory-ricochet model of NHQW with  a non-trivial neighborhood, to incrementally move away from the trivial neighborhood in the above model.

\section{NHQW  with left/right symmetric neighborhood} \label{sec:ndqwlrsym}
We take  the lattice and Hilbert space as  given in eq.~\eqref{eq:fqsq}, and add the neighborhood 
in which the  memory at the current site is not involved in the scattering, but the left/right neighbors are. This is   the  left/right symmetric neighborhood  $\mathcal{M}=\{-1,+1\}$.    If the current  particle position is $x$, the neighboring  memory locations are $\mathcal{M}_x = \{x-1,   x+1\}$.
 \subsection{A memory-ricochet NHQW}  \label{subsec:memric}
 The basic form of the memory-ricochet scattering for this neighborhood is as before. 
The operations involved in the scattering are,
\begin{enumerate}[label=(\roman{*})] 
\item \label{memo4}  Memory,  
\begin{align*}
	M :   \ket{x} \ket{v}  \ket{m_0  \ldots m_{N-1}} \mapsto \ket{x}&   \ket{v}  \ket{m_x} \ket{m_0} \ldots  \\		& \ket{m_{x-1}} U_v(\ket{m_{x-1}} \ket{m_{x+1}}) \ket{m_{x+1}\ldots m_{N-1}},
\end{align*}
where $U_v$  is a symmetric  matrix parameterized by  $\theta_v$, the memory strength, 
\begin{equation*} 
 U_v=\begin{pmatrix}
 \cos\theta_v&i \sin\theta_v\\
i \sin\theta_v&  \cos\theta_v\\
\end{pmatrix}.
\end{equation*}
\item \label{scat3}  Ricochet,  

Denoting  $\mathbf{m}_x = m_{x-1} m_{x+1}$, ricochet is 
\begin{equation*}
R :   \ket{x}  \ket{v}  \ket{m_0 \ldots m_x \ldots m_{N-1}}  \mapsto \ket{x}   U_{\mathbf{m}_x}(\ket{v})  \ket{m_0 \ldots m_x \ldots m_{N-1}},
\end{equation*}
where $U_{\mathbf{m}}$, $\mathbf{m} \in \{00,01,10,11\}$, are parameterized as
\begin{equation*} 
 U_{\mathbf{m}}=\begin{pmatrix}
 \cos\theta_{\mathbf{m}}&i \sin\theta_{\mathbf{m}}\\
i \sin\theta_{\mathbf{m}}&  \cos\theta_{\mathbf{m}}\\
\end{pmatrix},
\end{equation*}
and  the parameter $\theta_{\mathbf{m}}$ is the back-action parameter corresponding to ${\mathbf{m}}$. 
\end{enumerate} 
In terms of  $R$ and $M$  the  scattering is 
\begin{equation*}  
 S = R M.
\end{equation*} 
Advection, $A$,  is given as before,
 \label{prop3}
\begin{equation*}
A : \ket{x}   \ket{v}  \ket{m_0 \ldots m_{N-1}}  \mapsto \ket{x+v}   \ket{v} \ket{m_0 \ldots m_{N-1}}.
\end{equation*}
The overall transition is
\begin{equation*}
T =  A    S.
\end{equation*}

In this model, the parameter $\theta_v$  determines the effect of the velocity on the neighborhood memory qubits  through the memory operation $M$.  Also, there are $4$ possible neighborhood memory states,  hence the back-action on velocity through the ricochet $R$ is  determined by the  $4$ back-action parameters $\{\theta_{\mathbf{m}}\}$. These give  the walk a higher degree of maneuverability than the trivial neighborhood memory-ricochet model. The fact that the two stages act independently, however, implies that the particle (velocity) has no ``awareness" of the neighborhood memory state as it operates on them  through $M$. Similarly, the neighborhood memory state acts on the velocity through $R$  without ``awareness"  of the state of the velocity. Though this memory-ricochet scattering has more parameters than that with the trivial neighborhood, its general behavior   is expected to  remain  similar.

%

We now create  an NHQW with  a scattering  built from  a basis of unentangled orthogonal vectors  of the Hilbert space in eq.~\eqref{eq:fqsq}, i.e., an {\it unentangled orthogonal basis} (UOB).\footnote{The characterization and construction of families of unentangled orthogonal bases (UOB) for multi-qubit systems is in~\cite{lsw:lduobag}.} We call it the {\it UOB-scattering} NHQW.

\subsection{UOB-scattering NHQW}  \label{subsec:gsmod}
We study a fundamentally different scheme for neighborhood dependence, one in which the scattering acts simultaneously on velocity and neighborhood memory.

The neighborhood scattering operator, $U_S$, acts on  $\mathcal{H}_V \otimes  \mathcal{H}_{M_{x-1}} \otimes  \mathcal{H}_{M_{x+1}} $. 
First,  we informally assign meaning to the neighborhood memory states $\{\ket{m_{x-1} m_{x+1}}\}$. We say that  $\ket{00}$, $\ket{11}$ are the  states from which  the particle scatters in a  ``balanced" manner, and  $\ket{01}$, $\ket{10}$ are the states from which the scattering is ``unbalanced". Note that the information about the scattering behavior is encoded in the pair $\ket{m_{x-1} m_{x+1}}$, and not individual memory qubits. Information about velocity relative to the neighborhood state is in the basis state $\ket{v}\ket{m_{x-1} m_{x+1}}$ of $\mathcal{H}_V \otimes  \mathcal{H}_{M_{x-1}} \otimes  \mathcal{H}_{M_{x+1}} $.

To ascribe explicitly the change experienced by the internal (spin)   velocity and neighborhood states from the scattering relative to  the pre-scattering basis states,   
  we construct a UOB that carries the part of the scattering information related to the internal (spin) states of  velocity and memory.
In the interest of  notational economy, we first encode the effect of scattering on the internal (spin) states through a parameterized  {\it spin-transformation} notation. A spin-transformation with parameter $\eta$ is  a map of a basis $\{\ket{b}, \ket{b}^\perp\}$ of $\mathbb{C}^2$ to 
 the basis
\begin{alignat}{2} \label{rotatestuff}
\ket{b}_\eta& = \cos \eta \ket{b} + i \sin \eta \ket{b}^\perp,   \nonumber   \\
\ket{b}_\eta^\perp & = i\sin \eta \ket{b} + \cos \eta \ket{b}^\perp. 
\end{alignat}
We put this definition to use in defining the transformation of the internal (spin) states of velocity and memory. For velocity, we designate parameters $\alpha_0, \alpha_1$ for the spin states resulting from balanced scattering, and parameters $\beta_0, \beta_1$ for the unbalanced scatterings. Similarly  we define spin parameters $\gamma_l, \gamma_r$ (for left and right) that determine the behavior of the neighborhood memory qubits for both types of scatterings. The UOB is  defined using the spin parameters just described, and the spin-transformation in eq.~\eqref{rotatestuff},  
\begin{equation*}
\{\ket{j}_{\alpha_0}\ket{0}_{\gamma_l}\ket{1}_{\gamma_r}, \ket{j}_{\alpha_1}\ket{1}_{\gamma_l}\ket{0}_{\gamma_r}, \ket{j}_{\beta_0}\ket{1}_{\gamma_l}\ket{1}_{\gamma_r},  \ket{j}_{\beta_1}\ket{0}_{\gamma_l}\ket{0}_{\gamma_r} : j\in \{+1,-1\}\}.
\end{equation*}
 We also introduce external scattering parameters $\theta_{0}, \theta_{1}$,  involved in interactions that transform pairs of UOB elements in a manner akin to the spin transformation.  Having defined the parameters and the UOB, we construct  the neighborhood scattering operator $U_S$,
\begin{alignat}{4} \label{qwscat}
U_{S} : 
&\ket{+1}\ket{00} &&\mapsto &  & \cos\theta_{0} \ket{+1}_{\alpha_0}\ket{0}_{\gamma_l}\ket{1}_{\gamma_r}  + i \sin\theta_{0} \ket{-1}_{\alpha_1}\ket{1}_{\gamma_l}\ket{0}_{\gamma_r},  \nonumber \\
&\ket{-1}\ket{00} &&\mapsto &  & i \sin\theta_{0} \ket{+1}_{\alpha_0}\ket{0}_{\gamma_l}\ket{1}_{\gamma_r}  + \cos\theta_{0} \ket{-1}_{\alpha_1}\ket{1}_{\gamma_l}\ket{0}_{\gamma_r}, \nonumber \\
&\ket{+1}\ket{01} &&\mapsto &  &\ket{-1}_{\beta_0}\ket{1}_{\gamma_l}\ket{1}_{\gamma_r}, \nonumber   \\  
&\ket{-1}\ket{10} &&\mapsto &  & \ket{+1}_{\beta_0}\ket{1}_{\gamma_l}\ket{1}_{\gamma_r}, \nonumber   \\
&\ket{+1}\ket{10} &&\mapsto & & \ket{+1}_{\beta_1}\ket{0}_{\gamma_l}\ket{0}_{\gamma_r}, \nonumber \\
&\ket{-1}\ket{01} &&\mapsto &  & \ket{-1}_{\beta_1}\ket{0}_{\gamma_l}\ket{0}_{\gamma_r}, \nonumber  \\  
&\ket{+1}\ket{11} &&\mapsto &  &\cos\theta_{1}  \ket{+1}_{\alpha_1}\ket{1}_{\gamma_l}\ket{0}_{\gamma_r} + i \sin\theta_{1} \ket{-1}_{\alpha_0}\ket{0}_{\gamma_l}\ket{1}_{\gamma_r}, \nonumber \\
&\ket{-1}\ket{11} &&\mapsto & &  i \sin\theta_{1} \ket{+1}_{\alpha_1}\ket{1}_{\gamma_l}\ket{0}_{\gamma_r}  +  \cos\theta_{1} \ket{-1}_{\alpha_0}\ket{0}_{\gamma_l}\ket{1}_{\gamma_r}.  \nonumber \\
\end{alignat}
%
Notice that the  scattering updates the neighboring memory qubits together with the velocity. This is a  different form of scattering compared with  the memory-ricochet model. It takes balanced neighborhoods to unbalanced neighborhoods and vice-versa. The post-scattering velocity is ``informed" by the pre-scattering neighborhood state and  velocity. Conceptually, this scattering has the goal of implementing a reasonable search strategy. 

In Figures~\ref{fig:ndqw5}-\ref{fig:ndqw8}, we show  simulations of the NHQW  for the parameter values  in Table~\ref{table1}.  The lattice size is $N=13$.  The initial state for each walk is ($\lfloor N/2 \rfloor$ is the center of the lattice) 
 \begin{align*}
 \psi_0 &= \frac{1}{\sqrt{2}} \ket{\lfloor N/2 \rfloor}(\ket{+1} + \ket{-1})\bigotimes^N \ket{0}\\
 &= \frac{1}{\sqrt{2}} \ket{6}(\ket{+1} + \ket{-1})\bigotimes^N \ket{0}.
 \end{align*}
Each walk simulation is run for $\lfloor N/2 \rfloor = 6$ time steps. We plot the probability distribution that  the particle is at position $x\in \mathbb{Z}_N$ at each time step (the  velocity and memory tensor factors are traced out).
\begin{table}   [H]
\begin{center}
\begin{tabu}to\linewidth{|[1.5pt]c|c|c|c|c|c|c|c|c|[1.5pt]}  %
\tabucline[1.5pt]-
& $\alpha_0$ & $\alpha_1$ & $\beta_0$ & $\beta_1$ & $\gamma_l$ & $\gamma_r$ & $\theta_0$ & $\theta_1$ \\
\tabucline[1.5pt]-
Fig.~\ref{fig:ndqw5}&$0$& $0$& $\pi/4$& $\pi/4$& $\pi/2$& $\pi/2$& $\pi/4$ &$\pi/4$  \\
\hline 
Fig.~\ref{fig:ndqw4}&$0$& $0$& $0$& $0$& $0$& $0$& $\pi/4$ &$\pi/4$  \\
\hline 
Fig.~\ref{fig:ndqw7}&$0$& $0$& $0$& $0$& $0$& $0$& $0$ &$0$ \\ 
\hline 
Fig.~\ref{fig:ndqw6}&$\pi/2$& $0$& $0$& $0$& $0$& $0$& $0$ &$0$  \\ 
\hline 
Fig.~\ref{fig:ndqw2}&$\pi$& $\pi/3$& $\pi$& $\pi/6$& $\pi/2$& $\pi/2$& $\pi/6$ &$\pi/2$ \\ 
\hline 
Fig.~\ref{fig:ndqw1}&$0$& $\pi/2$& $0$& $\pi/2$& $0$& $0$& $0$ &$\pi/4$ \\ 
\hline 
Fig.~\ref{fig:ndqw3}&$\pi/2$& $0$& $0$& $\pi/2$& $0$& $\pi/2$& $0$ &$\pi/4$ \\ 
\hline 
Fig.~\ref{fig:ndqw8}&$0$& $\pi/2$& $0$& $\pi/2$& $0$& $\pi/6$& $\pi/4$ &$\pi/4$ \\ 
\tabucline[1.5pt]-
\end{tabu}
\end{center}
\caption{Parameter values used in simulations of NHQW}
 \label{table1}  
\end{table} 

Figure~\ref{fig:ndqw5} reproduces the classical random walk, while Figure~\ref{fig:ndqw4} shows the usual quantum walk. 
%
%
\begin{figure}[H]
\minipage{0.5\textwidth} 
  \includegraphics[width=0.85\linewidth]{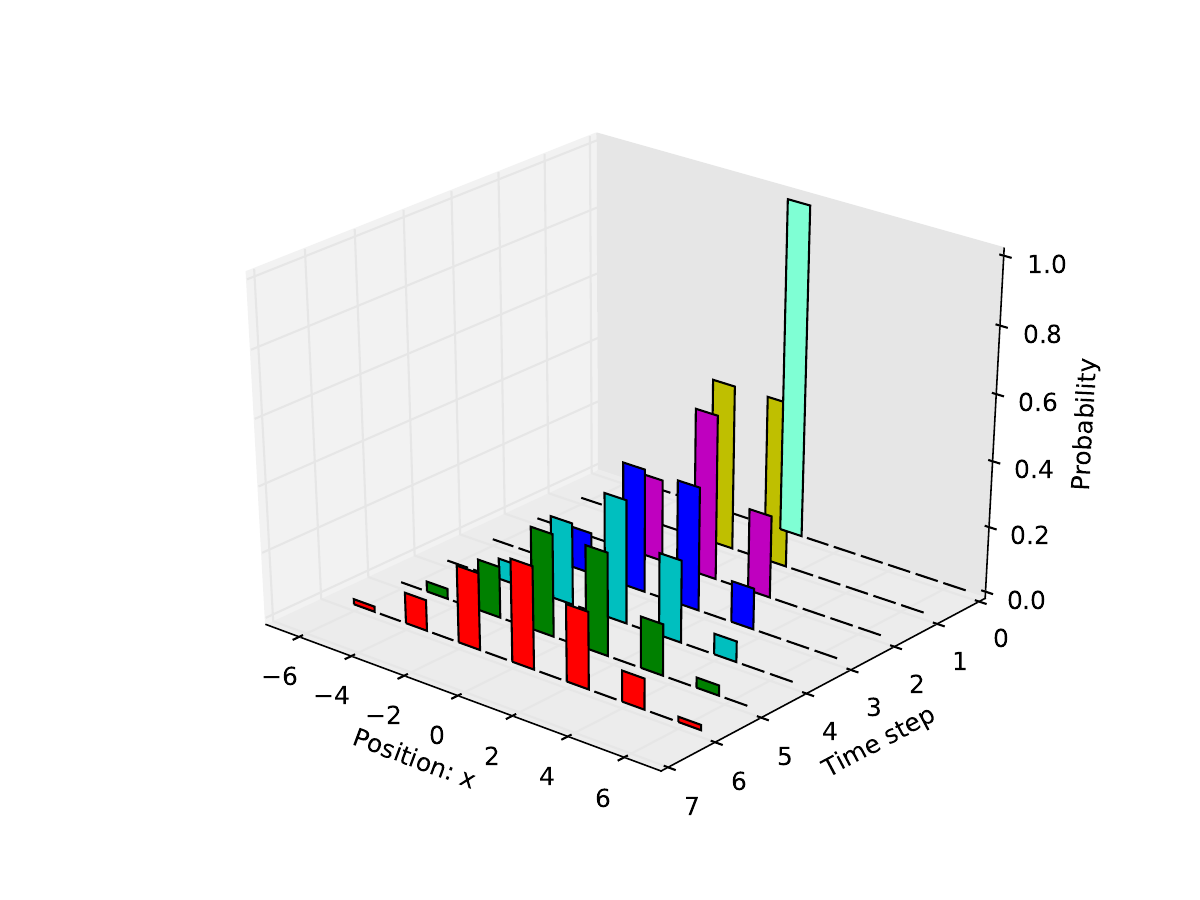} 
     \caption{
     Classical random walk:
$\alpha_0=
\alpha_1=0,
\beta_0=
\beta_1=
\theta_0=
\theta_1=\pi/4,
\gamma_l=
\gamma_r=\pi/2$.
} 
 \label{fig:ndqw5}
\endminipage\hfill
\minipage{0.5\textwidth}
  \includegraphics[width=0.85\linewidth]{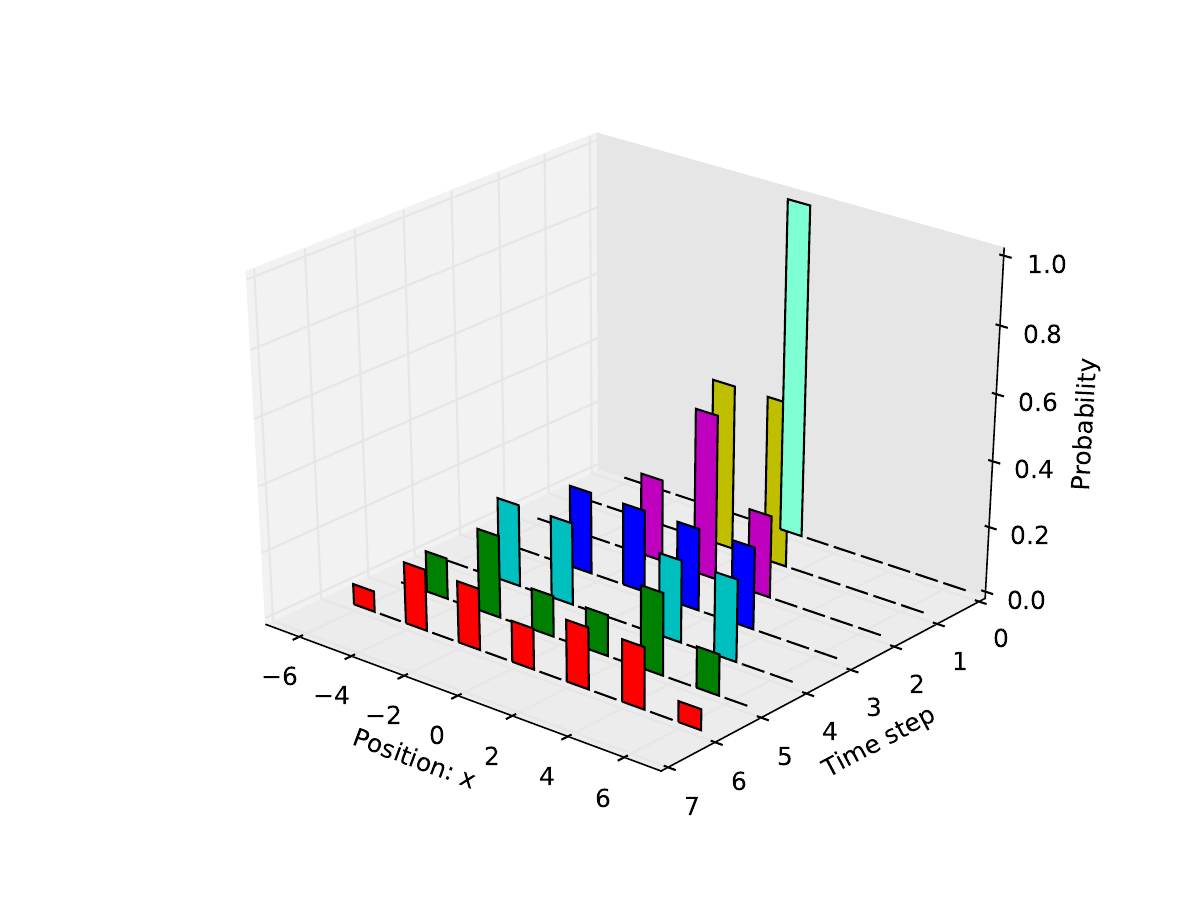} 
     \caption{Standard quantum  walk:
     $\alpha_0=
\alpha_1=
\beta_0=
\beta_1=
\gamma_l=
\gamma_r=0,
\theta_0=
\theta_1=\pi/4$.
     }
     \label{fig:ndqw4} 
 \endminipage
\end{figure}
The next few walks depart significantly from the usual classical random or quantum walks. Figure~\ref{fig:ndqw7} shows a walk in which the particle continues on its straight line  trajectory it was initially set to while changing the memory qubits as it walks. Figure~\ref{fig:ndqw6} differs from this by flipping the $\ket{+1}$ velocity to $\ket{-1}$ at the start and then continuing with the straight line trajectory.
\begin{figure}[H]
\minipage{0.5\textwidth}%
  \includegraphics[width=0.85\linewidth]{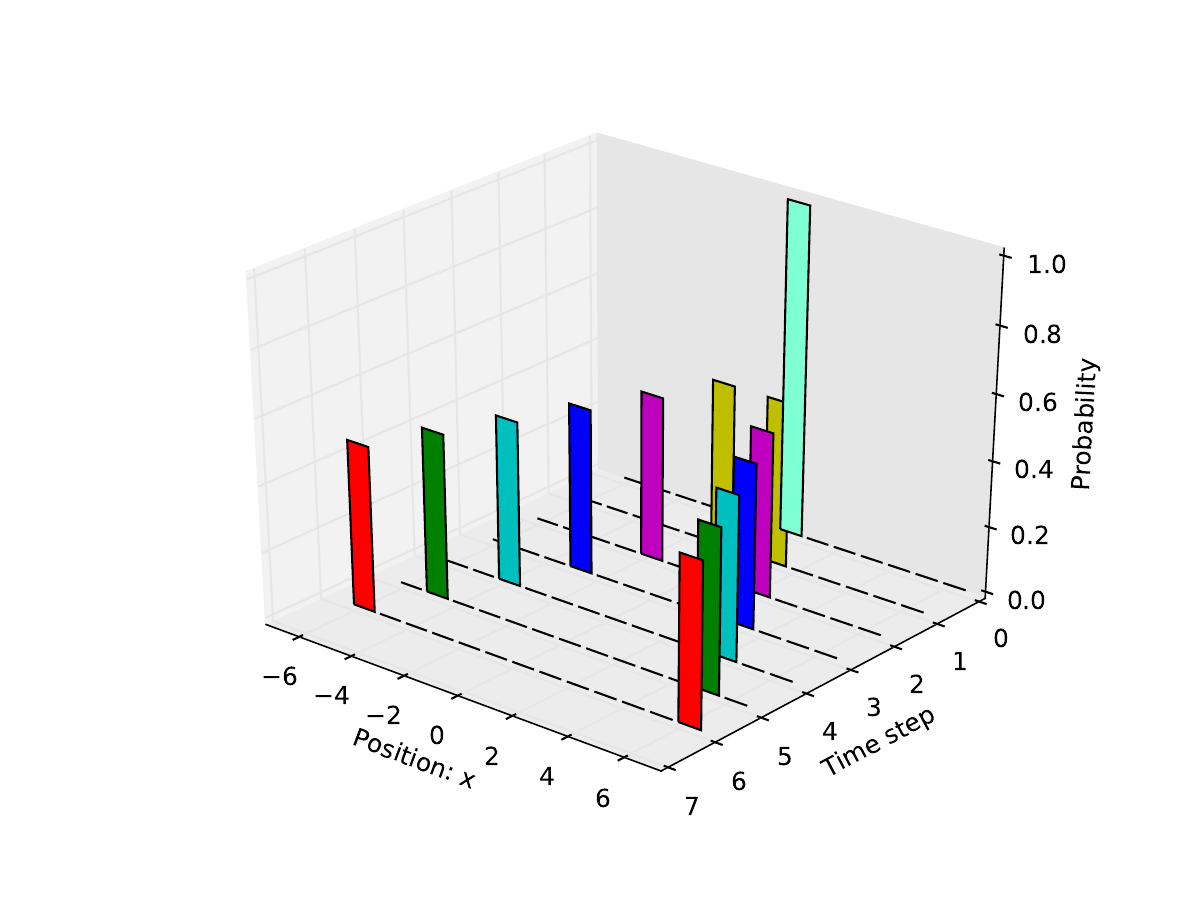} 
   \caption{Straight line walk while flipping memory qubits:
     $\alpha_0=\alpha_1=
\beta_0=
\gamma_l=
\gamma_r=
\theta_0=
\theta_1=0$.
} 
      \label{fig:ndqw7}
 \endminipage
\minipage{0.5\textwidth} %
  \includegraphics[width=0.85\linewidth]{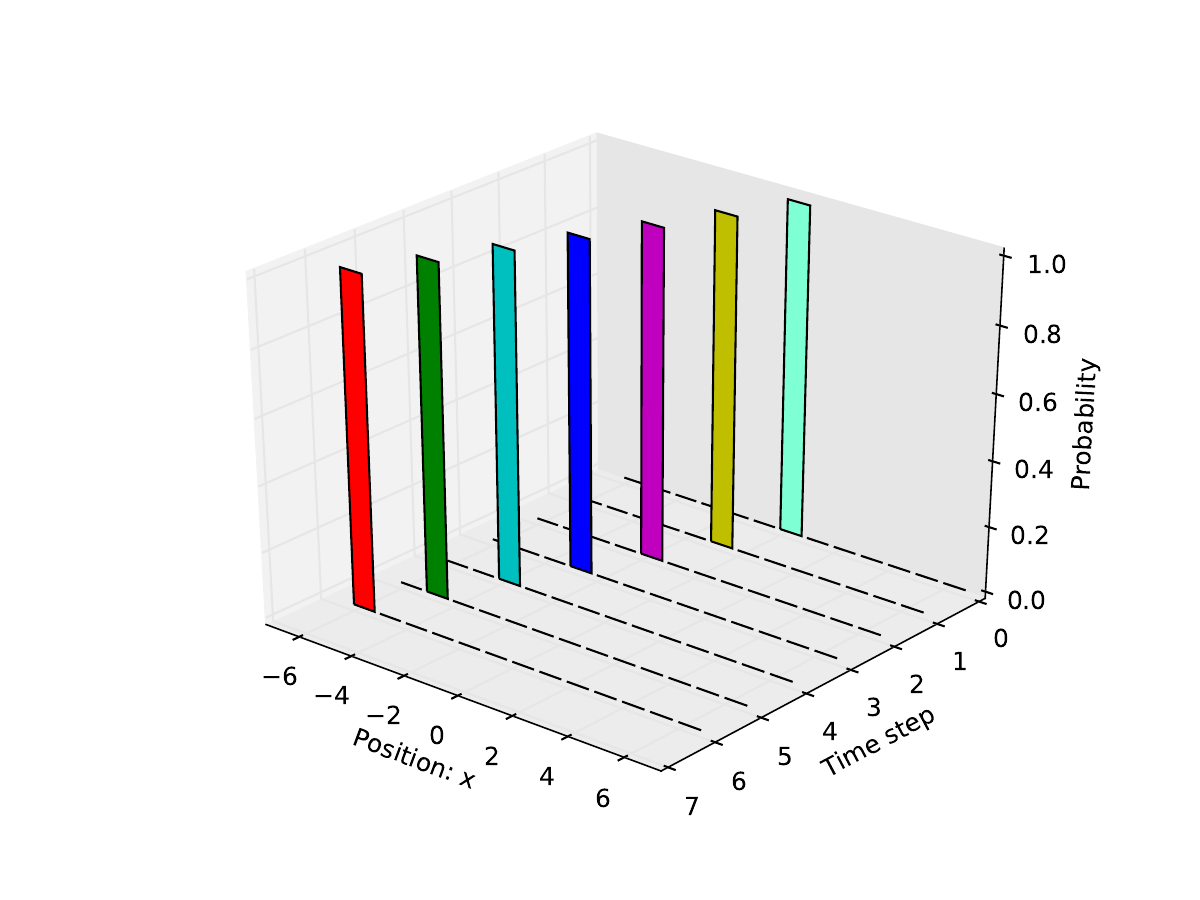} 
     \caption{Initial flip of velocity then straight line walk:
     $\alpha_0=
\beta_0=
\gamma_l=
\gamma_r=
\theta_0=
\theta_1=0, \alpha_1=\pi/2$.
}
\label{fig:ndqw6}
 \endminipage
 \end{figure}
Figures~\ref{fig:ndqw2},~\ref{fig:ndqw1},~\ref{fig:ndqw3},~\ref{fig:ndqw8}, show increasingly complex patterns that result from interaction of parameters. These walks demonstrate  behaviors displaying  flexibility, directionality and quantum randomness.
\begin{figure}[H]
\minipage{0.5\textwidth} %
  \includegraphics[width=0.85\linewidth]{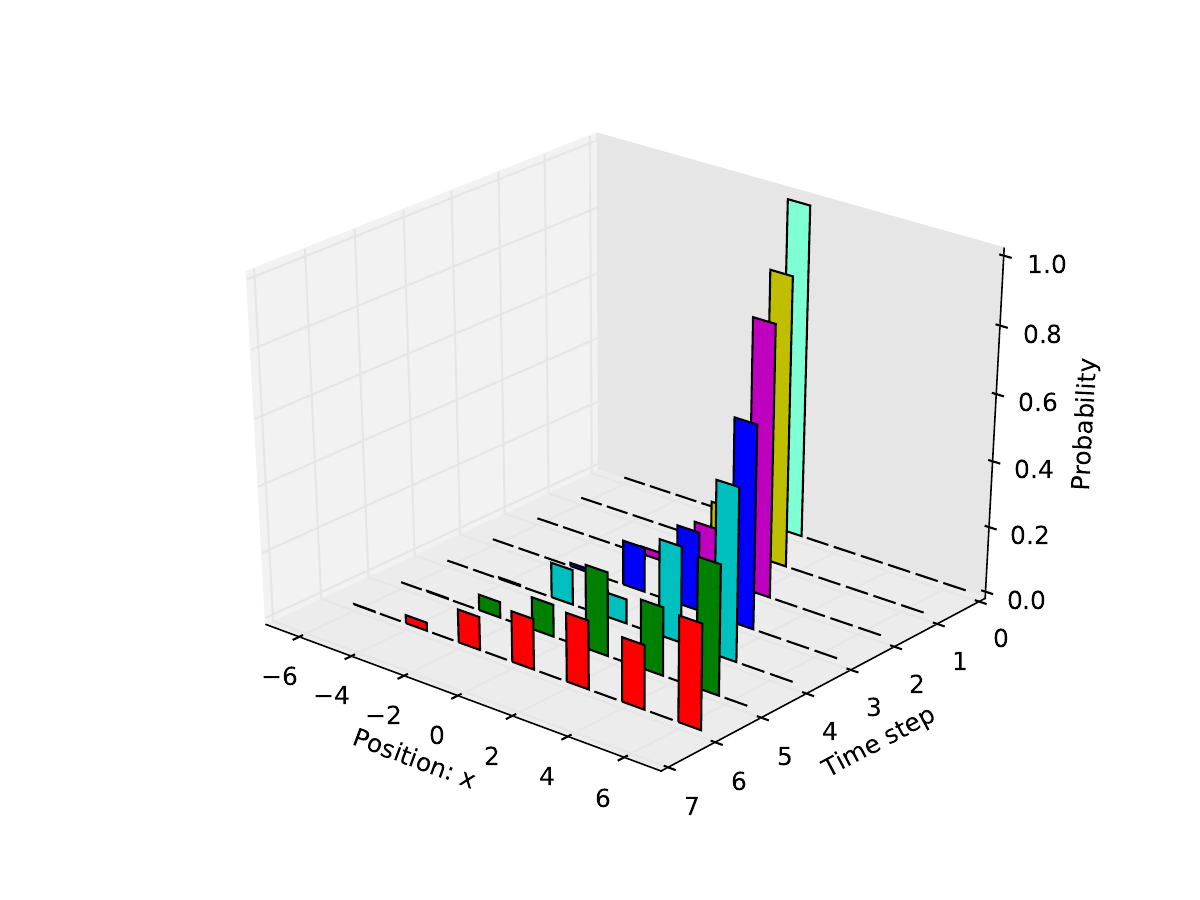} 
     \caption{Directionality and spread in walk:
  $\alpha_0=\pi,
\alpha_1=\pi/3,
\beta_0=\pi,
\beta_1=\pi/4,
\gamma_l=\pi/6,
\gamma_r=\pi/2,
\theta_0=\pi/6,
\theta_1=\pi/2$.
} 
\label{fig:ndqw2}
\endminipage\hfill
\minipage{0.5\textwidth} %
  \includegraphics[width=0.85\linewidth]{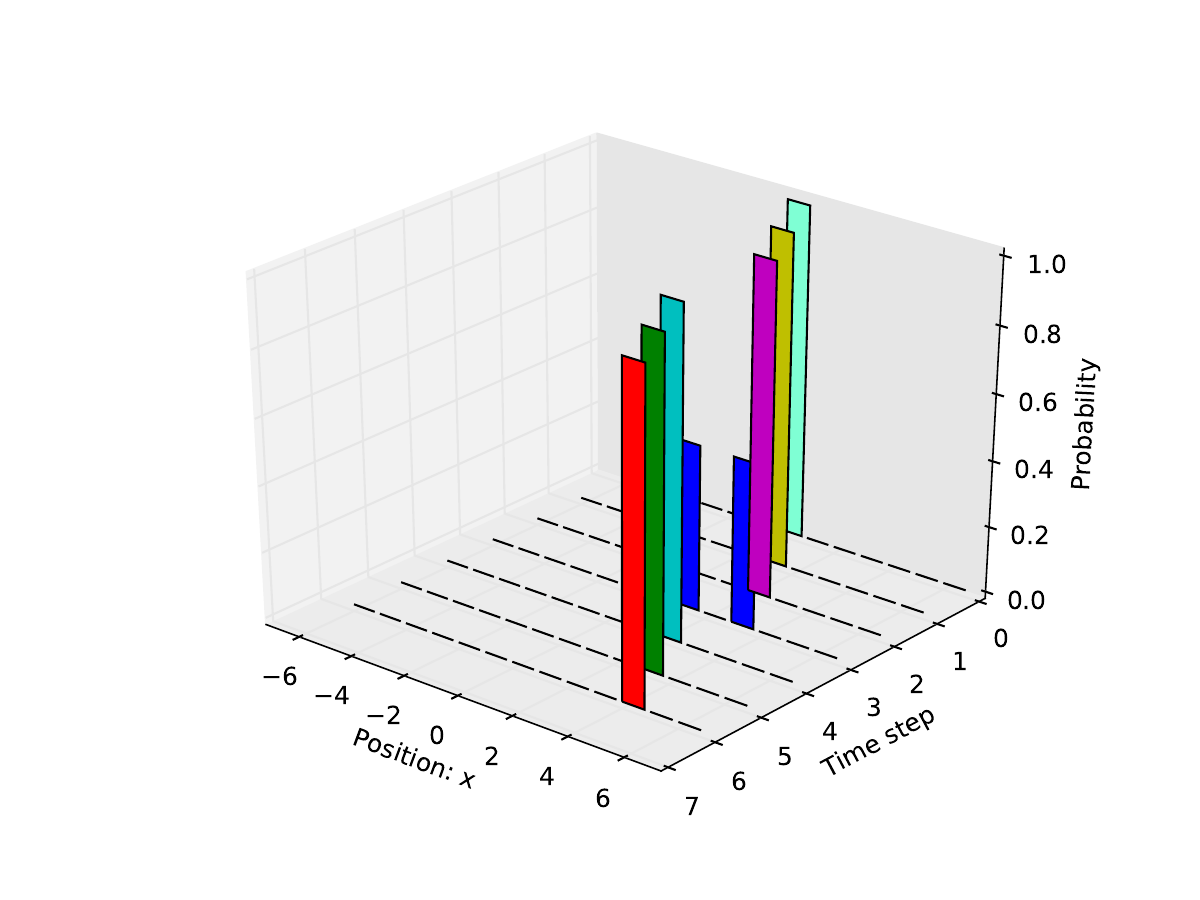} 
  \caption{Directionality and spread in walk:
 $\alpha_0=
\beta_0=
\gamma_l=
\gamma_r=
\theta_0=0,
\alpha_1=\pi/2=
\beta_1=\pi/2,
\theta_1=\pi/4$.
}
\label{fig:ndqw1} 
\endminipage 
\end{figure}
\begin{figure}[H]
\minipage{0.5\textwidth} %
  \includegraphics[width=0.85\linewidth]{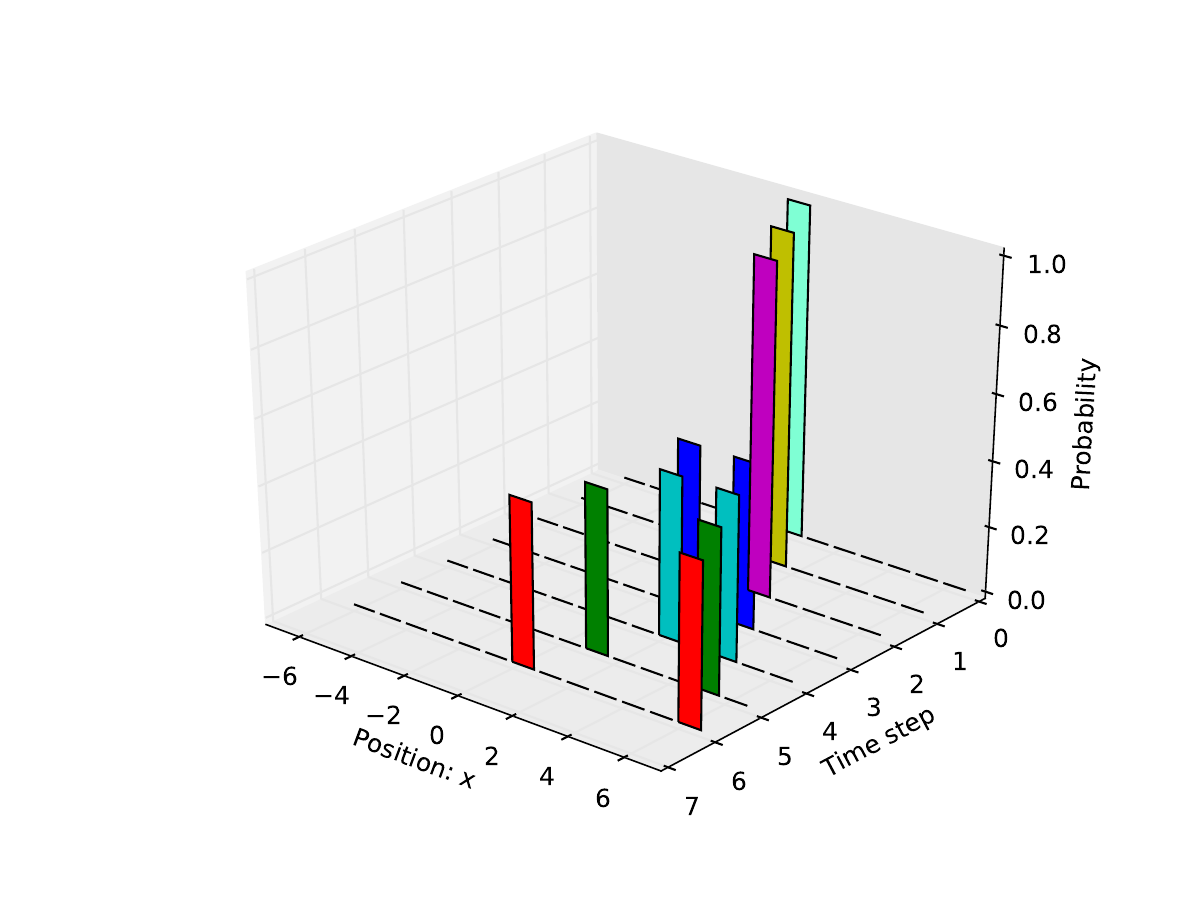} 
   \caption{Directionality and spread in walk:
$\alpha_1=
\beta0=
\gamma_l=
\theta_0=0,
\alpha_0=
\beta_1=
\gamma_r=\pi/2,
\theta_1=\pi/4$.
}
\label{fig:ndqw3}
 \endminipage
 \hfill
\minipage{0.5\textwidth} %
\includegraphics[width=0.85\linewidth]{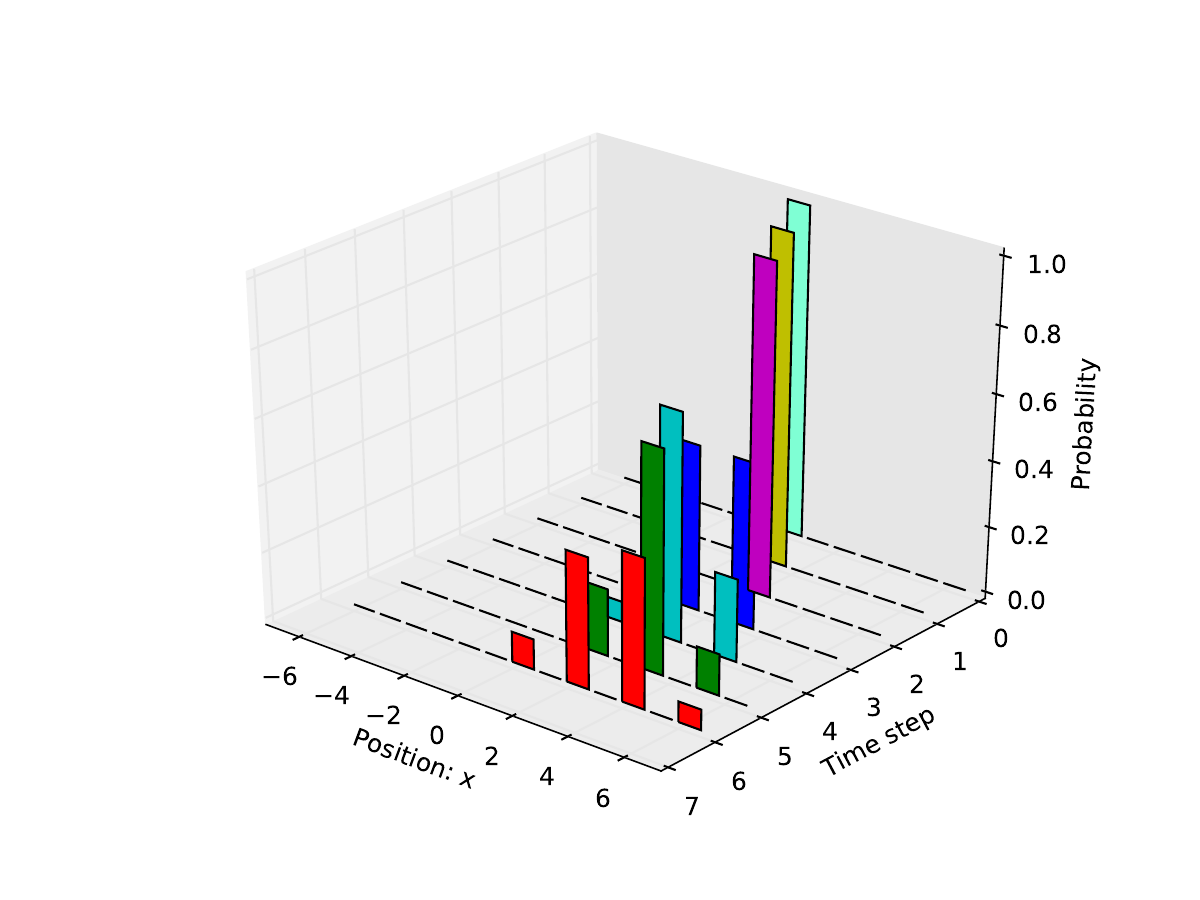} 
\caption{Directionality and spread in walk:
$\alpha_0=
\beta_0=
\gamma_l=0,
\alpha_1=
\beta_1=\pi/2,
\gamma_r=\pi/6,
\theta_0=
\theta_1=\pi/4$.
}
\label{fig:ndqw8}
 \endminipage
\end{figure}

\subsection{UOB-scattering NHQW as a single-particle sector of a QCA}  \label{subsubsec:gsmod}
A quantum cellular automaton (QCA) is a discrete space, discrete time, model of physics. It consists of  a collection of cells (or sites)  on a lattice, each with an identical finite-dimensional Hilbert space over it, called the {\it cell Hilbert space}.  The state of the QCA is a density operator on  the Hilbert space of the QCA, which  is the tensor product of the cell Hilbert spaces.  The evolution of the state of the QCA  is described by a {\it global  evolution operator}  that is unitary, translation-invariant and causal, i.e., restricts information to travel at a finite speed.  This means that the evolution is restricted to  propagate information from  a cell to  others  within a  finite {\it neighborhood} of it at each step of QCA evolution. 

A quantum lattice gas automaton (QLGA) is  a special kind of QCA, with  multiple quantum particles in each cell. If we  a particle may  either be present or absent in a cell, it is  represented by a qubit ($\mathbb{C}^2$), with the state $\ket{0}$ being ``particle absent", and $\ket{1}$ being ``particle present". If multiple particles are involved,  the Hilbert space of a cell is  a tensor product of multiple $\mathbb{C}^2$ factors (qubits), each of which corresponds to a  particle. 

The dynamics (evolution) of a QLGA consist of  advection followed by scattering (or in the opposite order). These are  similar, in spirit,  to a QW but in  a multi-particle setting.  Advection, the propagation (hopping) of particles  in  a QLGA is a permutation of tensor factors among neighboring cell Hilbert spaces. Advection operator takes the tensor factors of a cell Hilbert space to the corresponding tensor factors of one of its neighbors, thus carrying the information about the state of particles from a cell to another.  Scattering operator in a QLGA acts   as an identical  unitary operator on each cell Hilbert space. This cell-wise  operation is said to be  {\it local}. The reader is referred to~\cite{bib:slwqcaqlga, bib:sdlhdQWqlga, bib:msqcawp} for the definition and examples of  QCA and QLGA. 

A single-particle  QLGA state has  one cell with a single particle in it, and the rest without a particle. The  single-particle sector is the  span of the  single-particle states. In~\cite{bib:sdlhdQWqlga}, a number of  history dependent QWs were shown by construction to be  the single-particle sectors of  QLGA.    The NHQW described in this paper, however,  needs a more general description by a QCA for its multi-particle version. For a QCA, there is no natural definition of a particle.  However, a cell Hilbert space may still be composed of multiple qubits, i.e., multiple $\mathbb{C}^2$ tensor factors. We let some of these  represent particles and others represent memory qubits, a distinction that serves our purpose. A single-particle state has exactly one cell with a single particle in it, and none in the others.  The span of a {\it subset} of  single-particle states is a single-particle sector of a QCA~\cite{bib:slwqcaqlga}.

\begin{thm}
The UOB-scattering NHQW is a single-particle sector of a QCA.
\end{thm}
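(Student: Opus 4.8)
The plan is to construct a QCA --- a cell Hilbert space together with a global evolution operator $G$ --- and a linear isometry $\iota$ from the NHQW Hilbert space onto a single-particle sector, such that $\iota$ intertwines the NHQW transition $T = AS$ with the QCA dynamics, i.e. $\iota\, T = G\, \iota$. First I would fix the cell Hilbert space $\mathcal{H}_c = \mathbb{C}^2_{R}\otimes\mathbb{C}^2_{L}\otimes\mathbb{C}^2_{M}$, carrying a right-moving particle qubit, a left-moving particle qubit, and a memory qubit, so the QCA Hilbert space is $\bigotimes_{k\in\mathbb{Z}_N}\mathcal{H}_c$. I would then define $\iota$ by sending the NHQW basis state $\ket{x}\ket{v}\bigotimes_k\ket{m_k}$ to the QCA state whose only occupied particle qubit is the $v$-mode qubit of cell $x$ (the right mode if $v=+1$, the left mode if $v=-1$), with the memory qubit of cell $k$ set to $m_k$. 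A dimension count ($N\cdot 2\cdot 2^N$ on both sides) shows $\iota$ is an isometry onto the subspace of states with exactly one occupied particle qubit, which is a single-particle sector in the sense of the paragraph preceding the theorem.

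Next I would realize advection $A$ as QLGA advection $G_A$: the permutation of tensor factors that shifts every right-mode qubit into its right neighbor and every left-mode qubit into its left neighbor, leaving the memory qubits fixed. This is a permutation of tensor factors between nearest neighbors, hence automatically unitary, translation-invariant, and causal, and one checks directly that $\iota\, A = G_A\, \iota$, matching $\ket{x}\ket{v}\mapsto\ket{x+v}\ket{v}$.

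The heart of the argument is to realize the UOB scattering $S$ by a global operator $G_S$ on $\bigotimes_k\mathcal{H}_c$ that is unitary, translation-invariant, and causal, and whose restriction to the single-particle sector acts, at the cell $x$ holding the particle, as $U_S$ of eq.~\eqref{qwscat} on the velocity qubits of cell $x$ and the memory qubits of cells $x-1$ and $x+1$. Under the encoding the velocity spin-transforms $\alpha_0,\alpha_1,\beta_0,\beta_1$ of eq.~\eqref{rotatestuff} become rotations inside the single-particle subspace of the pair $R_x,L_x$, while the memory spin-transforms $\gamma_l,\gamma_r$ become single-qubit rotations on $M_{x-1}$ and $M_{x+1}$; thus $U_S$ is a unitary supported on the width-three window $\{x-1,x,x+1\}$. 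Because this window spans three cells, $G_S$ cannot be a single cell-local scattering, and this is exactly the point at which the NHQW requires the full QCA generality rather than a QLGA. I would build $G_S$ as a bounded-depth concatenation of QLGA steps --- advections interleaved with cell-local scatterings, using the cell Hilbert space as workspace --- that propagate the neighboring memory content into the active cell, apply the combined transformation, and return it. The resulting operator is a concatenation of QLGA and hence a QCA, of the type studied in~\cite{bib:msqcawp}.

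Finally I would set $G = G_A\, G_S$, verify that $G$ is unitary, translation-invariant, and causal, that it leaves the image of $\iota$ invariant, and that $\iota\, T = G\, \iota$, so that the single-particle sector of $(\mathcal{H}_c, G)$ realizes the UOB-scattering NHQW. I expect the main obstacle to be the construction in the third paragraph: extending the width-three $U_S$ to a genuinely global translation-invariant unitary $G_S$ and verifying its single-particle restriction. Two features need care. First, the coupling is genuinely three-body --- the post-scattering velocity in eq.~\eqref{qwscat} depends jointly on the pair $m_{x-1}m_{x+1}$, not on each neighbor separately --- so the concatenation must have enough depth to bring \emph{both} neighbors' memory to the active cell rather than coupling each neighbor in isolation. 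Second, the width-three windows centered at adjacent cells overlap on a shared memory qubit, so one must organize the concatenation (or the block partition) so that the global operator is unitary while still collapsing to a single copy of $U_S$ in the one-particle sector, where only one window is ever active.
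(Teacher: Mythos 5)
Your overall strategy---embed the walk as a single-particle sector, realize the walk's advection as a permutation of tensor factors, and build the scattering as a bounded-depth concatenation of QLGA stages that shuttle the neighbors' memories into the active cell, apply a cell-local copy of $U_S$, and shuttle them back---is exactly the strategy of the paper's proof. However, there is a genuine gap at precisely the step you flag as the main obstacle, and the gap is created by your choice of cell. With the three-qubit cell $\mathbb{C}^2_R \otimes \mathbb{C}^2_L \otimes \mathbb{C}^2_M$ there is no way to ``propagate the neighboring memory content into the active cell'': in a QLGA stage, inter-cell motion can only come from advection, which is a permutation of tensor factors (cell-local scatterings act cell-wise and cannot move information between cells), and a permutation needs a free tensor factor in cell $x$ to receive each imported qubit. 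Your cell has a single memory slot, already occupied by $m_x$, and the two velocity slots cannot serve as guaranteed-free workspace, because which of $R_x$, $L_x$ is unoccupied depends on the velocity state, while a permutation of tensor factors cannot be conditioned on the state. So at most one neighbor's memory can ever sit in cell $x$ at a time, whereas---as you yourself observe---$U_S$ in eq.~\eqref{qwscat} couples the velocity jointly to the pair $m_{x-1}m_{x+1}$, so a single two-body gate per neighbor in isolation does not suffice.

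The paper resolves this by enlarging the cell to \emph{four} qubits, $V_0\otimes V_1\otimes M_0\otimes M_1$: two memory slots per cell are exactly the workspace needed so that one advection $\sigma_1$ (the $M_0$ contents shift one direction, the $M_1$ contents the opposite direction) deposits $m_{x+1}$ and $m_{x-1}$ simultaneously into the two memory slots of the active cell, where a cell-local scattering $L_1$ then implements $U_S$. The price of the doubled memory is a position-dependent embedding---a site's memory is stored in $M_0$ or $M_1$ according to which side of the particle it lies on, eq.~\eqref{stateemb}---and two further QLGA stages whose conditional swaps $L_2$, $L_3$ restore that convention after the inverse shuffle and after the particle hop. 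To repair your proof you must either adopt such an enlarged cell and supply this bookkeeping, or else prove that the three-body $U_S$ decomposes exactly into a finite alternating sequence of two-body gates on $(v,m_{x-1})$ and $(v,m_{x+1})$, each realized by a shift-in/apply/shift-out QLGA triple; this is possible in principle since the coupling graph is connected, but it is a nontrivial synthesis argument that you have not supplied, and it is not the paper's route. As stated, your third paragraph does not compile into a QCA. (A minor point: the dimension count in your first paragraph does not by itself show $\iota$ is an isometry; what does is that $\iota$ maps an orthonormal basis to orthonormal vectors, which your explicit assignment indeed provides.)
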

\begin{proof}
We construct the QCA and its specific single-particle sector  equivalent to the UOB-scattering NHQW.  A cell of this QCA consists of $4$ qubits,  $V_0 \otimes V_1 \otimes M_0 \otimes M_1 =  \bigotimes^4 \mathbb{C}^2$. Two of these  $V_0, V_1 = \mathbb{C}^2$, hold the particle state as an element of $V_0 \otimes V_1$. A cell may hold none:  $\ket{v_0 v_1} \in \ket{00}$, one (single):  $\ket{v_0 v_1} \in \{\ket{01}, \ket{10}\}$, or two: $\ket{v_0 v_1} \in \ket{11}$, particles. The other two qubits  $M_0,  M_1 = \mathbb{C}^2$, hold the memory state of the cell  as an element of $M_0 \otimes M_1$. A basis state of a cell  is $\ket{v_0 v_1}  \ket{m_0 m_1} : v_i,m_i  \in \{0,1\}$. Basis states of the QCA are  $\bigotimes_{k\in \mathbb{Z}_N} \ket{v^k_0 v^k_1} \ket{m^k_0 m^k_1}$, where the superscript $k$ indicates the cell index.
\begin{figure}[H] 
\includegraphics[width=0.75\textwidth]{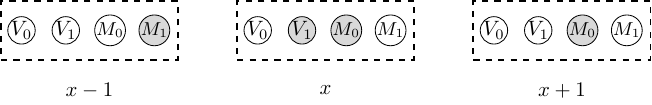} 
\caption{Cells of the multi-particle QCA generalizing the UOB-scattering NHQW.  White circles represent $\ket{0}$. Gray circles represent $\ket{1}$.}
\label{figure2}
\end{figure}
Let us  describe how the NHQW states are to be interpreted in the QCA. Each single-particle state of the cell is assigned a direction of motion: $\ket{v_0 v_1} = \ket{01}$ is right-moving and $\ket{v_0 v_1} = \ket{10}$ is left-moving. 
%
A single-particle sector of the QCA is the subspace spanned by basis states  having only one cell $x$ in a  single-particle  state  $\ket{v^x_0 v^x_1} \in \{\ket{01}, \ket{10}\}$, while the rest of the cells are in the no-particle state $\ket{v^k_0 v^k_1} = \ket{00}$ for $k\ne x$. Memory qubits of the NHQW are inserted in the QCA states so as to allow the dynamics of the QCA to affect the NHQW transitions correctly. An NHQW basis  state is embedded in the QCA Hilbert space as
\begin{alignat}{3} \label{stateemb}
\ket{x}\ket{+1}\otimes \bigotimes_{k\in \mathbb{Z}_N} \ket{m_k}  &\longleftrightarrow  &\bigotimes_{k < x}\ket{00}\ket{0 m_k}\otimes \underbrace{\ket{01}\ket{m_x 0}}_x \otimes  \bigotimes_{k > x}\ket{00}\ket{m_k 0}, \nonumber \\
\ket{x}\ket{-1}\otimes \bigotimes_{k\in \mathbb{Z}_N} \ket{m_k}   &\longleftrightarrow  &\bigotimes_{k < x}\ket{00}\ket{0 m_k}\otimes \underbrace{\ket{10}\ket{m_x 0}}_x \otimes  \bigotimes_{k > x}\ket{00}\ket{m_k 0} .
\end{alignat}


 

Evolution of this QCA  consists of three QLGA in tandem.  We denote the first QLGA's (stage $1$) advection as  $\sigma_1$ and   scattering as $S_1$. It uses $\sigma_1$ to shuffle the neighboring memory elements to the center cell. 
\begin{equation*}
\sigma_1: \bigotimes_{k\in \mathbb{Z}_N} \ket{m^k_0}\ket{m^k_1} \mapsto \bigotimes_{k\in \mathbb{Z}_N} \ket{m^{k+1}_0}\ket{m^{k-1}_1},
\end{equation*} 
acting as identity on the $V_1, V_2$ factors of each cell. Then $S_1$ acts cell-by-cell,  restricting  to each cell as a  local (cell-wise) scattering  $L_1$ that mimics the NHQW neighborhood scattering operator  $U_S$ in eq.~\eqref{qwscat}. The scattering,  $U_{S}$, of  the NHQW carries over to the local scattering, $L_1$, of the QLGA,   through replacing the velocity states of NHQW, $\ket{+1}$, $\ket{-1}$,  with  the right/left moving states $\ket{01}, \ket{10} \in V_0 \otimes V_1$ of the QCA, respectively. The spin-transformation with parameter $\eta$  in eq.~\eqref{rotatestuff} acts on $\{\ket{01}, \ket{10}\}$ as 
\begin{alignat*}{2} 
\ket{01}_\eta & = \cos \eta \ket{01} + i \sin \eta \ket{10},   \nonumber   \\
\ket{10}_\eta & = i\sin \eta \ket{01} + \cos \eta \ket{10}. 
\end{alignat*}
We now get a straightforward description of the first local scattering  $L_1$ that   acts as  the  NHQW neighborhood scattering operator  $U_S$,
\begin{alignat*}{4} 
L_1 : & V_0\otimes V_1 \otimes M_0 \otimes M_1 &\quad &\mapsto & \quad & V_0\otimes V_1 \otimes M_0 \otimes M_1 \nonumber \\
&\ket{01}\ket{00} &&\mapsto &  & \cos\theta_{0} \ket{01}_{\alpha_0}\ket{0}_{\gamma_l}\ket{1}_{\gamma_r}  + i \sin\theta_{0} \ket{10}_{\alpha_1}\ket{1}_{\gamma_l}\ket{0}_{\gamma_r},  \nonumber \\
&\ket{10}\ket{00} &&\mapsto &  & i \sin\theta_{0} \ket{01}_{\alpha_0}\ket{0}_{\gamma_l}\ket{1}_{\gamma_r}  + \cos\theta_{0} \ket{10}_{\alpha_1}\ket{1}_{\gamma_l}\ket{0}_{\gamma_r}, \nonumber \\
&\ket{01}\ket{01} &&\mapsto &  &\ket{10}_{\beta_0}\ket{1}_{\gamma_l}\ket{1}_{\gamma_r}, \nonumber   \\  
&\ket{10}\ket{10} &&\mapsto &  & \ket{01}_{\beta_0}\ket{1}_{\gamma_l}\ket{1}_{\gamma_r}, \nonumber   \\
&\ket{01}\ket{10} &&\mapsto & & \ket{01}_{\beta_1}\ket{0}_{\gamma_l}\ket{0}_{\gamma_r}, \nonumber \\
&\ket{10}\ket{01} &&\mapsto &  & \ket{10}_{\beta_1}\ket{0}_{\gamma_l}\ket{0}_{\gamma_r}, \nonumber  \\  
&\ket{01}\ket{11} &&\mapsto &  &\cos\theta_{1}  \ket{01}_{\alpha_1}\ket{1}_{\gamma_l}\ket{0}_{\gamma_r} + i \sin\theta_{1} \ket{10}_{\alpha_0}\ket{0}_{\gamma_l}\ket{1}_{\gamma_r}, \nonumber \\
&\ket{10}\ket{11} &&\mapsto & &  i \sin\theta_{1} \ket{01}_{\alpha_1}\ket{1}_{\gamma_l}\ket{0}_{\gamma_r}  +  \cos\theta_{1} \ket{10}_{\alpha_0}\ket{0}_{\gamma_l}\ket{1}_{\gamma_r}.  \nonumber \\
\end{alignat*}
On all the other basis elements $L_1$ acts as the identity. The   scattering operator $S_1$ of  stage $1$ QLGA acts by $L_1$ on each cell, so it is:
\begin{equation*}
S_1 = \bigotimes_{k\in \mathbb{Z}_N} L_1
\end{equation*}
The  global evolution operator for this QLGA is
\begin{equation*}
\mathcal{G}_1 = S_1 \sigma_1.
\end{equation*}

The next two QLGA stages are needed to accomplish the NHQW advection\footnote{We use the term {\it advection} both for a walking step of the NHQW and  for the propagation of multiple  particles in a  QLGA.} and switch the memory qubits to valid positions. Stage $2$ QLGA is described by the advection $\sigma_2=\sigma_1^{-1}$ and local scattering $L_2$. $\sigma_2$ takes the  memory qubits from cell $x$ (these were shuffled in from the neighbors by $\sigma_1$, and then altered by  $L_1$) back to the respective neighbors. In the process it also returns the memory qubits of cell  $x$ from the neighbors (unaltered). 
\begin{alignat*}{4}
\sigma_2 = \sigma^{-1}_1: &\bigotimes_{k\in \mathbb{Z}_N} \ket{m^k_0}\ket{m^k_1} &&\mapsto &  & \bigotimes_{k\in \mathbb{Z}_N} \ket{m^{k-1}_0}\ket{m^{k+1}_1}. \\
\end{alignat*}
Stage $2$ QLGA's local scattering $L_2$, which sets the center cell $x$ memory to the correct position before the next QLGA  (stage $3$) advection sends it  to its destination (cell $x+1$). This is needed to ensure that the memory will be in the correct form, prescribed by eq.~\eqref{stateemb}, at the end of  the current  QCA evolution step (after stage $3$ QLGA).  $L_2$  conditionally switches the states of $M_0$ and $M_1$ if there is a right-moving particle in the cell.
\begin{alignat*}{4}
L_2 : &\ket{01}\ket{m_0 m_1} &&\mapsto &  &\ket{01}\ket{m_1 m_0}. \nonumber  
\end{alignat*}
It acts  as identity on the other basis elements. The stage $2$ scattering operator, $S_2$, is
\begin{equation*}
S_2 = \bigotimes_{k\in \mathbb{Z}_N} L_2
\end{equation*}
The  evolution operator for this QLGA is
\begin{equation*}
\mathcal{G}_2 = S_2 \sigma_2 = S_2 \sigma^{-1}_1 
\end{equation*}

The final stage QLGA (stage $3$) first uses an advection $\sigma_3$ to carry out the NHQW advection $A$,  hopping the particle right/left, 
\begin{equation*}
\sigma_3: \bigotimes_{k\in \mathbb{Z}_N} \ket{v^k_0}\ket{v^k_1} \mapsto \bigotimes_{k\in \mathbb{Z}_N} \ket{v^{k+1}_0}\ket{v^{k-1}_1},
\end{equation*}
while acting as identity on $M_0, M_1$ factors.
Then it applies a local scattering $L_3$, which sets the memory at the left destination cell (cell $x-1$) to its valid state to prepare for the next QCA evolution step. It  conditionally switches the states of  $M_0$ and $M_1$ if there is a left-moving particle in the cell, 
\begin{alignat*}{4}
L_3 : 
&\ket{10}\ket{m_0 m_1} &&\mapsto &  & \ket{10}\ket{m_1 m_0},  \nonumber
\end{alignat*}
acting as identity on the other basis elements. This  ensures that the memory will be in the correct form prescribed by eq.~\eqref{stateemb}.  The stage $3$ scattering operator, $S_3$, is
\begin{equation*}
S_3 = \bigotimes_{k\in \mathbb{Z}_N} L_3
\end{equation*}The  evolution operator for stage $3$ QLGA is
\begin{equation*}
\mathcal{G}_3 = S_3 \sigma_3 
\end{equation*}

The global QCA evolution operator, whose restriction to the single-particle sector is the NHQW transition, is 
\begin{equation} \label{qcaevolv}
\mathcal{G} = \mathcal{G}_3 \mathcal{G}_2 \mathcal{G}_1 = S_3 \sigma_3 S_2 \sigma_2 S_1 \sigma_1 = S_3 \sigma_3 S_2 \sigma^{-1}_1 S_1 \sigma_1.
\end{equation}
\end{proof}
In the following figures we show one step of the NHQW transition, starting with a right-moving particle state and an unbalanced neighborhood memory configuration.  Figure 9 shows the QCA cells and an initial configuration of memory and particle states. After the NHQW scattering  counterpart $S_1$, we track  the  projection on either the right (Figure~\ref{figqw4}) or the left-moving  (Figure~\ref{figqw2}) single-particle sector, to  observe the dynamics.  The figures only show parts of the advection of each QLGA stage relevant to the center cell $x$ and its neighbors, i.e., the hops in which  cell $x$  plays a part as either  an origin or a destination or both during the evolution. White circles represent $\ket{0}$, gray circles represent $\ket{1}$, whereas colored circles represent any state that may be consistent with the QCA description.  

Figure~\ref{figqw4} shows the projection on the right-moving single-particle sector. After stage $2$ QLGA advection, it  shows the memory repositioning by stage $2$ scattering $S_2$ in center cell $x$ (origin).  At that point, the local scattering operator $L_2$  (local part of $S_2$), acting on the center cell $x$,  switches the $M_0$ and $M_1$ factors as the particle is in the right-moving state $\ket{v^x_0 v^x_1} = \ket{01}$.

Figure~\ref{figqw2} shows the projection on the left-moving single-particle sector. After stage $3$ QLGA advection, it  shows the memory repositioning by stage $3$ scattering $S_3$ in the left  cell $x-1$ (destination).  At that point, the local scattering operator $L_3$ (local part of $S_3$), acting on the left cell $x-1$,  switches the $M_0$ and $M_1$ factors as the particle is in the left-moving state $\ket{v^{x-1}_0 v^{x-1}_1} = \ket{10}$.
\begin{figure}[H] 
\includegraphics[width=0.75\textwidth]{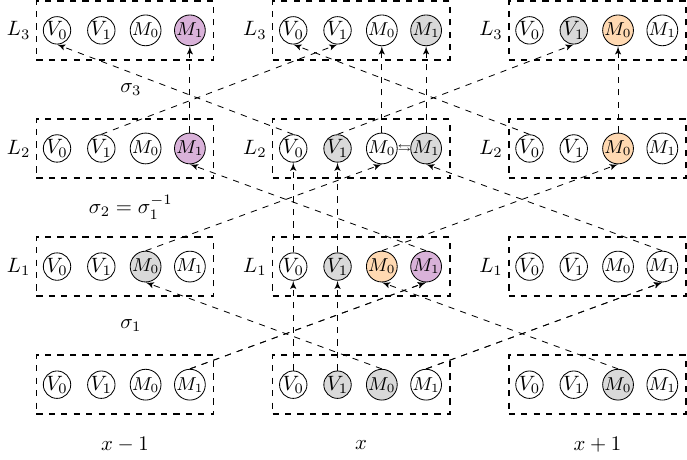}
\caption{NHQW transition as a single-particle sector of the QCA.  
From stage $1$ scattering  ($L_1$ blocks) onwards,  projection on the right-moving single-particle sector is shown.
 White: $\ket{0}$, gray: $\ket{1}$, colored: any state. Memory repositioning by  $L_2$ in the origin cell $x$   is shown by the  left-right  arrows.}
\label{figqw4} 
\end{figure}

\begin{figure}[H] 
\includegraphics[width=0.75\textwidth]{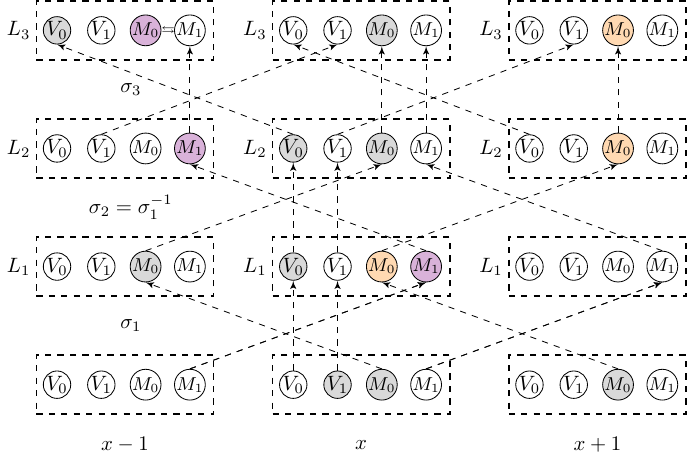}
\caption{NHQW transition as a single-particle sector of the QCA.   
From stage $1$ scattering  ($L_1$ blocks) onwards,  projection on the left-moving single-particle sector is shown.
 White: $\ket{0}$, gray: $\ket{1}$, colored: any state. Memory repositioning by  $L_3$ in the destination cell $x-1$   is shown by the  left-right  arrows.}
\label{figqw2}
\end{figure}

This QCA is a concatenation of $3$ QLGA, but is not a QLGA itself when viewed at the  scale at which the dynamics are homogeneous. That means that the evolution of the QCA as a whole in eq.~\eqref{qcaevolv} is not an advection followed by scattering acting locally on each cell, even if the cell structure were to be redefined. The analysis would be similar to that for the QCA example analyzed in~\cite{bib:msqcawp}. In that paper,  a QCA that is a concatenation of  two QLGA is shown to not  be a  QLGA itself.
  
\section{Conclusion} \label{sec:conc}
In this paper, we generalize an interesting model of QW from~\cite{crt:qwwtsa1d}, that has  a memory qubit at each site to keep a history of particle's  visits and steer it. Our generalization allows a neighborhood of memory qubits in scattering interaction with the particle. We call it the neighborhood-history quantum walk (NHQW). We construct an  example NHQW on a one-dimensional lattice, with the left/right symmetric neighboring memories. This construction utilizes unentangled orthogonal bases (UOB)~\cite{lsw:lduobag} to define the scattering, hence we call it the UOB-scattering NHQW. The use of UOB helps in explicitly encoding the scattering strategy as a joint reconfiguration of the   particle velocity and neighborhood from pre-scattering  ``balanced" or ``unbalanced" neighborhood memory states and the particle velocity.  The UOB are parametrized, so the particle velocity and the memory states  undergo,  upon scattering, parametrized internal (spin) state transformations. By including  further  ``external"  interaction parameters that rotate among the elements of UOB, we gain more adjustability in scattering.  The scattering is  geared for a reasonable search strategy.  We find, through simulations,  the classical random walk and quantum walk as special cases, but also several other variations in traversal patterns. These are controllable through  appropriate parameter tuning.   Further work in higher-dimensional lattices would reveal  more fully the traversal potential of this model. We also describe a multi-particle QCA generalization of  this NHQW. That  QCA is a concatenation of $3$ QLGA, but is not a QLGA itself. Thus, it does not have a particle description at the scale at which the dynamics are homogeneous. In this respect, our NHQW differs from a host of history dependent QWs~\cite{bib:sdlhdQWqlga}, whose multi-particle generalizations are QLGA. NHQW, therefore, launches QWs into a dynamic regime with a higher degree of  complexity. In the future, we aim to analyze the search  capabilities and recurrence  properties of NHQWs. To this end, there is an extant body of work  to draw upon, concerning   measures such as   mixing time~\cite{Aharonov:2001,Venegas-Andraca2012}, hitting time~\cite{Magniez:2009,Venegas-Andraca2012}, and P\'olya Number~\cite{sjk:recpolya}. 
NHQW have  exponential (in the number of sites) resource requirements as they require an additional qubit per lattice site. Compared to  schemes that retain memory in a finite number of additional qubits, for instance those QWs that have additional memory in the particle state (for instance, velocity) to track the particle's history,  it would require a study of  individual cases of search and algorithmic necessity to justify NHQW's use. This is  despite it having higher versatility, by definition, than the finite memory counterparts. Perhaps it is naturally a model  that  accounts for the dynamics of the system, as in noisy environments or modeling fundamental physics. Theoretical and numerical simulation of NHQW may reveal limiting behaviors and patterns that are significantly different from conventional QWs, which is expected as the dynamics are derived from a QCA. 
Simulating QWs in the current NISQ~\cite{p:qcnisq} era of quantum computing with multitudes of coupling parameters accounting interactions among qubits of a quantum computer, a well-designed  NHQW may both facilitate estimation of the parameters, based on deviation from ideal dynamics,  and tune the walk to respond in a controlled manner, perhaps adaptively.




\section*{Acknowledgements}
The author would like to acknowledge productive discussions with David Meyer and members of his research group.

\def\MR#1{\relax\ifhmode\unskip\spacefactor3000 \space\fi
  \href{http://www.ams.org/mathscinet-getitem?mr=#1}{MR#1}}
\begin{bibdiv}
\begin{biblist}

\bib{mnrs:svqw}{article}{
  title = {Search via Quantum Walk},
  author = {Magniez, F.},
  author = { Nayak, A.},
  author = {Roland, J.},
  author = {Santha, M.},
  journal = {SIAM Journal on Computing},
  volume = {40},
  pages = {142-164},
  date = {2011},
        note={\href{http://dx.doi.org/10.1137/090745854}{doi:10.1137/090745854}},
 
}

\bib{bib:skw}{article}{
      author={Shenvi, N.},
      author={Kempe, J.},
      author={Whaley, K.},
       title={Quantum random walk search algorithm},
     journal={Physical Review A},
      volume={67},
        date={2003},
       pages={052307/1\ndash11}
        note={\href{http://dx.doi.org/10.1103/PhysRevA.67.052307}{doi:10.1103/PhysRevA.67.052307}},
       
}

\bib{kmor:fiaeadqw}{inproceedings}{
	Author = {Krovi, H.},
	Author = {Magniez, F.},
	Author = {Ozols, M.},
	Author = {{R}oland, J.},
	Booktitle = {Proceedings of 37st International Colloquium on Automata, Languages and Programming (ICALP)},
	Date-Added = {2012-03-13 16:17:13 +0000},
	Date-Modified = {2012-03-13 16:17:13 +0000},
	Keywords = {c-acti},
	Pages = {540-551},
	Pdf = {http://www.lri.fr/~magniez/PAPIERS/kmor-icalp10.pdf},
	Title = {Finding is as easy as detecting for quantum walks},
	date = {2010},
        note={\href{http://dx.doi.org/10.1007/978-3-642-14165-2\_46}{doi:10.1007/978-3-642-14165-2\_46}},
        }
        
\bib{sm:qsmcba}{inproceedings}{
 author = {Szegedy, M.},
 title = {Quantum Speed-Up of Markov Chain Based Algorithms},
 booktitle = {Proceedings of the 45th Annual IEEE Symposium on Foundations of Computer Science},
 series = {FOCS '04},
 year = {2004},
 isbn = {0-7695-2228-9},
 pages = {32--41},
 numpages = {10},
 url = {http://dx.doi.org/10.1109/FOCS.2004.53},
 acmid = {1033158},
 publisher = {IEEE Computer Society},
 address = {Washington, DC, USA},
        note={\href{http://dx.doi.org/10.1109/FOCS.2004.53}{doi:10.1109/FOCS.2004.53}},
} 

\bib{bib:Montanaro20150301}{article}{
	author = {Montanaro, A.},
	title = {Quantum speedup of Monte Carlo methods},
	volume = {471},
	number = {2181},
	year = {2015},
	publisher = {The Royal Society},
	issn = {1364-5021},
	journal = {Proceedings of the Royal Society of London A: Mathematical, Physical and Engineering Sciences}
        note={\href{http://dx.doi.org/10.1098/rspa.2015.0301}{doi:10.1098/rspa.2015.0301}},
}

\bib{pm:gqn}{article}{
 author = {Paparo, G.},
 author = {Martin-Delgado, M.},
 title = {Google in a Quantum Network},
   journal = {Scientific Reports},
  volume = {2},
  year = {2012},
  publisher = {Macmillan Publishers Ltd.},
         note={\href{http://dx.doi.org/10.1038/srep00444}{doi: 10.1038/srep00444}},
}

\bib{pmcm:qgcn}{article}{
 author = {Paparo, G.},
 author = {M{\"u}ller, M.},
 author = {Comellas, F.},
 author = {Martin-Delgado, M.},
 title = {Quantum Google in a Complex Network},
   journal = {Scientific Reports},
  volume = {3},
  year = {2013},
  publisher = {Macmillan Publishers Ltd.},
         note={\href{http://dx.doi.org/10.1038/srep02773}{doi: 10.1038/srep02773}},
}

\bib{lcetk:uqcdqrw}{article}{
  title = {Universal quantum computation using the discrete-time quantum walk},
  author = {Lovett, N.},
  author = {Cooper, S.},
  author = {Everitt, M.},
  author = {Trevers, M.},
  author = {Kendon, V.},
  journal = {Phys. Rev. A},
  volume = {81},
  issue = {4},
  pages = {042330},
  numpages = {7},
  year = {2010},
  month = {Apr},
  publisher = {American Physical Society},
  url = {http://link.aps.org/doi/10.1103/PhysRevA.8142330}
        note={\href{http://dx.doi.org/10.1103/PhysRevA.8142330}{doi:10.1103/PhysRevA.8142330}},
}

\bib{plp:rqw}{article}{
  title = {Realization of Quantum Walks with Negligible Decoherence in Waveguide Lattices},
  author = {Perets, H.},
  author = {Lahini, Y.},
  author = {Pozzi, F.},
  author = {Sorel, M.},
  author = {Morandotti, R.},
  author = {Silberberg, Y.},
  journal = {Phys. Rev. Lett.},
  volume = {100},
  issue = {17},
  pages = {170506},
  numpages = {4},
  year = {2008},
  month = {May},
  publisher = {American Physical Society},
        note={\href{http://dx.doi.org/10.1103/PhysRevLett.100.170506}{doi:10.1103/PhysRevLett.100.170506}},
}
 
\bib{qlm:eqwqp}{article}{
author = {Qiang, X.},
author = {Loke, T.},
author = {Montanaro, A.},
author = {Aungskunsiri, K.},
author = {Zhou, X.},
author = {O’Brien, J.},
author = {Wang, J.},
author = {Matthews, J.},
  title = {Efficient quantum walk on a quantum processor},
  journal = {Nature Communications},
date={2016},
  volume = {7},
        note={\href{http://dx.doi.org/10.1038/ncomms11511}{doi:10.1038/ncomms11511}},
}

\bib{bca:qwdmc}{article}{
  title = {Quantum walks driven by many coins},
  author = {Brun, T.},
  author = {Carteret, H.},
  author = {Ambainis, A.},
  journal = {Phys. Rev. A},
  volume = {67},
  issue = {5},
  pages = {052317},
  numpages = {17},
  year = {2003},
  month = {May},
  publisher = {American Physical Society},
        note={\href{http://dx.doi.org/10.1103/PhysRevA.67.052317}{doi:10.1103/PhysRevA.67.052317}},
}

\bib{bca:qctrw}{article}{
       title={Quantum to classical transition for random walks},
      author={Brun, T.},
      author={Carteret, H.},
      author={Ambainis, A.},
     journal={Physical Review Letters},
      volume={91},
       issue={13},
       pages={130602/1\ndash4},
        year={2003},
         note={\href{http://dx.doi.org/10.1103/PhysRevLett.91.130602}{doi: 10.1103/PhysRevLett.91.130602}},
}

\bib{mg:odqwwm}{article}{
      author={Mc~Gettrick, M.},
       title={One dimensional quantum walks with memory},
        date={2010},
     journal={Quantum Information and Computation},
      volume={10},
       issue={5\&6},
       pages={0509\ndash0524},
         note={\href{http://dl.acm.org/citation.cfm?id=2011362.2011371}{acm: 2011362.2011371}},
       
}

\bib{crt:qwwtsa1d}{article}{
 author = {Camilleri, E.},
 author = {Rohde, P.},
 author = {Twamley, J.},
 title = {Quantum walks with tuneable self-avoidance in one dimension},
   journal = {Scientific Reports},
  volume = {4},
  year = {2014},
  publisher = {Macmillan Publishers Ltd.},
         note={\href{http://dx.doi.org/10.1038/srep04791}{doi: 10.1038/srep04791}},
}

\bib{pbhmpbk:nrnrqw}{article}{
  title = {Nonreversal and nonrepeating quantum walks},
      author={Proctor, T.},
      author={Barr, K.},
      author={Hanson, B.},
      author={Martiel, S.},
      author={Pavlovic, V.},
      author={Bullivant, A.},
      author={Kendon, V.},
  journal = {Phys. Rev. A},
  volume = {89},
  issue = {4},
  pages = {042332},
  numpages = {8},
  year = {2014},
  month = {Apr},
  publisher = {American Physical Society},
        note={\href{http://dx.doi.org/10.1103/PhysRevA.89.042332}{doi:10.1103/PhysRevA.89.042332}},
}

\bib{rbg:qwwmrccf}{article}{
       title={Quantum walks with memory provided by recycled coins and a memory of the coin-flip history},
      author={Rohde, P.},
      author={Brennen, G.},
      author={Gilchrist, A.},
     journal={Physical Review A},
      volume={87},
       issue={5},
       pages={052302/1\ndash11},
        year={2013},
        note={\href{http://dx.doi.org/10.1103/PhysRevA.8752302}{doi:10.1103/PhysRevA.8752302}}
}

\bib{bib:Rosmanis}{article}{
  title = {Quantum snake walk on graphs},
  author = {Rosmanis, A.},
  journal = {Physical Review A},
  volume = {83},
  issue = {2},
  pages = {022304},
  numpages = {14},
  year = {2011},
        note={\href{http://dx.doi.org/10.1103/PhysRevA.83.022304}{doi:10.1103/PhysRevA.83.022304}}
  
}

\bib{bib:akrcmqwf}{inproceedings}{
      author={Ambainis, A.},
      author={Kempe, J.},
      author={Rivosh, A.},
       title={Coins make quantum walks faster},
   booktitle={Proceedings of the 16th ACM-SIAM Symposium on Discrete Algorithms},
        year={2005},
       pages={1099\ndash 1108},
        note={\href{http://dl.acm.org/citation.cfm?id=1070432.1070590}{acm:1070432.1070590}}
       
}

\bib{p:qcnisq}{article}{
  title = {Quantum {C}omputing in the {NISQ} era and beyond},
  author = {Preskill, J.},
  journal = {{Quantum}},
  issn = {2521-327X},
  publisher = {{Verein zur F{\"{o}}rderung des Open Access Publizierens in den Quantenwissenschaften}},
  volume = {2},
  pages = {79},
  year = {2018},
        note={\href{http://dx.doi.org/10.22331/q-2018-08-06-79}{doi:10.22331/q-2018-08-06-79}}
}

\bib{zpw:emmqciqa}{inproceedings}{
  title={An Efficient Methodology for Mapping Quantum Circuits to the IBM QX Architectures},
  author={Zulehner A.},
    author={Paler A.},
  author={Wille R.},
  year={2018},
  booktitle={2018 Design, Automation \& Test in Europe Conference \& Exhibition (DATE)},
  note={\href{http://dx.doi.org/10.23919/DATE.2018.8342181}{doi:10.23919/DATE2018.8342181}}
}

\bib{f:spwc}{article}{
title = {Simulating Physics with Computers},
  author = {Feynman, R.},
   journal = {Int. J. Theor. Phys.},
    year = {1982},
     volume={21},
    issue={6\/7},
    pages={467},
    url = {http://link.aps.org/doi/10.1007/BF02650179},
    note={\href{http://dx.doi.org/10.1007/BF02650179}{doi:10.1007/BF02650179}},
        }

\bib{dm:qmlga}{article}{
  title = {Quantum mechanics of lattice gas automata: One-particle plane waves and potentials},
  author = {Meyer, David A.},
  journal = {Phys. Rev. E},
  volume = {55},
  issue = {5},
  pages = {5261--5269},
  numpages = {0},
  year = {1997},
  month = {May},
  publisher = {American Physical Society},
  url = {https://link.aps.org/doi/10.1103/PhysRevE.55.5261},
  note={\href{http://dx.doi.org/10.1103/PhysRevE.55.526}{doi:10.1103/PhysRevE.55.526}},

}

\bib{adma:deqwhctl}{article}{
  title = {Dirac equation as a quantum walk over the honeycomb and triangular lattices},
  author = {Arrighi, P.},
  author = {Di Molfetta, G.},
  author = {M\'arquez-Mart\'{\i}n, I.},
  author = {Armando, P.},
  journal = {Phys. Rev. A},
  volume = {97},
  issue = {6},
  pages = {062111},
  numpages = {5},
  year = {2018},
  month = {Jun},
  publisher = {American Physical Society},
  url = {https://link.aps.org/doi/10.1103/PhysRevA.97.062111},
  note={\href{http://dx.doi.org/10.1103/PhysRevA.97.062111}{doi:10.1103/PhysRevA.97.062111}},
 }

\bib{mbd:qwmdfcst}{article}{
  title = {Quantum walks as massless Dirac fermions in curved space-time},
  author = {Di Molfetta, G.} 
  author = {Brachet, M.},
  author = {Debbasch, F.},
  journal = {Phys. Rev. A},
  volume = {88},
  issue = {4},
  pages = {042301},
  numpages = {5},
  year = {2013},
  month = {Oct},
  publisher = {American Physical Society},
  url = {http://link.aps.org/doi/10.1103/PhysRevA.88.042301},
  note={\href{http://dx.doi.org/10.1103/PhysRevA.88.042301}{doi:10.1103/PhysRevA.88.042301}},
}


%

\bib{lsw:lduobag}{article}{
  author={Lebl, J.},      
  author={Shakeel, A.},
  author={Wallach, N.},
  title={Local distinguishability of generic unentangled orthonormal bases},
  journal = {Physical Review A},
  volume = {93},
  issue = {1},
  pages = {012330/1--6},
  numpages = {6},
  year = {2016},
  month = {Jan},
  note= {\href{http://dx.doi.org/10.1103/PhysRevA.93.012330}{doi:10.1103/PhysRevA.93.012330}}
}

\bib{bib:msqcawp}{article}{
  title = {Quantum cellular automata without particles},
  author = {Meyer, D.},
  author = {Shakeel, A.},
  journal = {Phys. Rev. A},
  volume = {93},
  issue = {1},
  pages = {012333},
  numpages = {7},
  date = {2016},
  month = {Jan},
  publisher = {American Physical Society},
  note= {\href{http://dx.doi.org/10.1103/PhysRevA.9312333}{doi:10.1103/PhysRevA.9312333}}
}

\bib{bib:slwqcaqlga}{article}{
      author={Shakeel, A.},
      author={Love, P.},
       title={When is a Quantum Cellular Automaton (QCA) a Quantum Lattice Gas Automaton (QLGA)?},
        date={2013},
     journal={Journal of Mathematical Physics},
      volume={54},
       pages={092203/1\ndash40},
  note= {\href{http://dx.doi.org/10.1063/1.4821640}{doi:10.1063/1.4821640}}
       
}

\bib{bib:sdlhdQWqlga}{article}{
      author={Shakeel, A.},
      author={Meyer, D.},      
      author={Love, P.},
       title={History dependent quantum random walks as quantum lattice gas automata},
        date={2014},
     journal={Journal of Mathematical Physics},
      volume={55},
       pages={092203/1\ndash40},
  note= {\href{http://dx.doi.org/10.1063/1.4903977}{doi:10.1063/1.4903977}}
        
}

\bib{Aharonov:2001}{inproceedings}{
 author = {Aharonov, D.},
 author = {Ambainis, A.},
 author = {Kempe, J.},
 author = {Vazirani, U.},
 title = {Quantum Walks on Graphs},
 booktitle = {Proceedings of the Thirty-third Annual ACM Symposium on Theory of Computing},
 series = {STOC '01},
 year = {2001},
 isbn = {1-58113-349-9},
 location = {Hersonissos, Greece},
 pages = {50--59},
 numpages = {10},
 publisher = {ACM},
 address = {New York, NY, USA},
   note= {\href{http://dx.doi.org/v}{doi:10.1145/380752.380758}}

} 

\bib{Venegas-Andraca2012}{article}{
author={Venegas-Andraca, S.},
title={Quantum walks: a comprehensive review},
journal={Quantum Information Processing},
year={2012},
volume={11},
number={5},
pages={1015--1106},
  note= {\href{http://dx.doi.org/10.1007/s11128-012-0432-5}{doi:10.1007/s11128-012-0432-5}}
}

\bib{Magniez:2009}{inproceedings}{
 author = {Magniez, F.},
 author = {Nayak, A.},
 author = {Richter, P.},
 author = {Santha, M.},
 title = {On the Hitting Times of Quantum Versus Random Walks},
 booktitle = {Proceedings of the Twentieth Annual ACM-SIAM Symposium on Discrete Algorithms},
 series = {SODA '09},
 year = {2009},
 location = {New York, New York},
 pages = {86--95},
 numpages = {10},
 acmid = {1496780},
 publisher = {Society for Industrial and Applied Mathematics},
 address = {Philadelphia, PA, USA},
        note={\href{http://dl.acm.org/citation.cfm?id=1496770.1496780}{acm:1496770.1496780}}
}

\bib{sjk:recpolya}{article}{
  title = {Recurrence and P\'olya Number of Quantum Walks},
 author = {\v{S}tefa\v{n}\'ak, M.},
 author={Jex, I.},
 author={Kiss, T.},
  journal = {Phys. Rev. Lett.},
  volume = {100},
  issue = {2},
  pages = {020501},
  numpages = {4},
  year = {2008},
  month = {Jan},
  publisher = {American Physical Society},
  note= {\href{http://dx.doi.org/10.1103/PhysRevLett.10020501}{doi:10.1103/PhysRevLett.10020501}}
}
\end{biblist}
\end{bibdiv}

\end{document}